\documentclass[aps,pra,twocolumn,twoside,a4]{revtex4}

\usepackage{graphicx,epsfig}
\usepackage{color}
\usepackage[usenames,dvipsnames]{xcolor}
\usepackage{amsmath,bbm,amssymb, amsthm}

\newcommand{\tr}{{\rm tr}}

\def\be{\begin{eqnarray}}
\def\ee{\end{eqnarray}}

\newcommand{\ket}[1]{| #1 \rangle}
\newcommand{\bra}[1]{\langle #1 |}
\newcommand{\pure}[1]{| #1 \rangle\langle #1|}
\newcommand{\avg}[1]{\langle #1 \rangle} 

\newcommand{\upix}[1]{^{( #1)}} % upper index in brackets

\newcommand{\abs}{{\rm abs}}
\newcommand{\opt}{{\rm opt}}
\newcommand{\Dmin}{D_{\rm min}}
\newcommand{\Dmax}{D_{\rm max}}

\newcommand{\eps}{\varepsilon}

\newcommand{\Dmineps}{D_{\rm min}^\varepsilon}
\newcommand{\Dmaxeps}{D_{\rm max}^\varepsilon}

\newcommand{\ephi}{|\langle e_0|0\rangle|^2}

\setcounter{footnote}{-1}

\newtheorem{defi}{Definition}\def\DE{\begin{defi}}\def\ED{\end{defi}}
\newtheorem{lemma}{Lemma}\def\LE{\begin{lemma}}\def\EL{\end{lemma}}
\newtheorem{theo}{Theorem}\def\LE{\begin{theo}}\def\EL{\end{theo}}

\begin{document}

\title{Coherence and measurement in quantum thermodynamics}

\author{Philipp Kammerlander}
\email{kammerlander@phys.ethz.ch}
\affiliation{Institute for Theoretical Physics, ETH Zurich, Wolfgang-Pauli-Strasse 27, 8093 Zurich, Switzerland.}

\author{Janet Anders}
\email{janet@qipc.org}
\affiliation{Department of Physics and Astronomy, University of Exeter, Stocker Road, EX4 4QL, United Kingdom.}

%\date{\today}

\begin{abstract} 

Thermodynamics is a highly successful macroscopic theory widely used across the natural sciences and for the construction of everyday devices, from car engines and fridges to power plants and solar cells. With thermodynamics predating quantum theory, research now aims to uncover the thermodynamic laws that govern finite size systems which may in addition host quantum effects.
%Recent theoretical breakthroughs include the characterisation of the efficiency of quantum thermal engines\cite{Scully, Kosloff14, Lutz14} and the extension of widely used classical non-equilibrium fluctuation theorems to the quantum regime\cite{Mukamel03, TLH07}. A new thermodynamic resource theory\cite{Janzing00} has led to the discovery of a \emph{set} of second laws that replaces the standard macroscopic second law for finite size systems\cite{Brandao13b, Lostaglio14}. These results have substantially advanced our understanding of nanoscale thermodynamics, however putting a finger on what is genuinely ``quantum'' in quantum thermodynamics has remained a challenge.
Here we identify information processing tasks, the so-called ``projections'', that can only be formulated within the framework of quantum mechanics. We show that the physical realisation of such projections can come with a non-trivial thermodynamic work \emph{only} for quantum states with coherences. This contrasts with information erasure, first investigated by Landauer, for which a thermodynamic work cost applies for classical and quantum erasure alike. Implications are far-reaching, adding a thermodynamic dimension to measurements performed in quantum thermodynamics experiments, and providing key input for the construction of a future quantum thermodynamic framework. Repercussions are discussed for quantum work fluctuation relations and thermodynamic single-shot approaches.
%\vspace{1.5cm}

\end{abstract}

%\keywords{??}

%\pacs{\textcolor{red}{??}}

\maketitle

%\vspace{5cm}

%% Introduction %%%%%%%%%%%%
\section{Introduction} \label{sec:intro}
When  Landauer argued in 1961 that any physical realisation of  erasure of information has a fundamental thermodynamic work cost he irrevocably linked thermodynamics and information theory \cite{Landauer61}. A practical consequence of this insight is that all computers must dissipate a minimal amount of heat in each irreversible computing step, a threshold that is becoming a concern with future computer chips entering atomic scales. The treatment of general \emph{quantum} information processing tasks within the wider framework of quantum thermodynamics has only recently begun.
Theoretical breakthroughs include the characterisation of the efficiency of quantum thermal engines \cite{Scully, Kosloff14, Lutz14} and the extension of widely used classical non-equilibrium fluctuation theorems to the quantum regime \cite{Mukamel03, TLH07}. A new thermodynamic resource theory \cite{Janzing00} has led to the discovery of a \emph{set} of second laws that replaces the standard macroscopic second law for finite size systems \cite{Brandao13b, Lostaglio14}. These results have substantially advanced our understanding of nanoscale thermodynamics, however putting a finger on what is genuinely ``quantum'' in quantum thermodynamics has remained a challenge.
Quantum mechanics differs from classical mechanics in at least three central aspects: the special nature of measurement, the possibility of a quantum system to be in a superposition and the existence of quantum correlations. The thermodynamic energy needed to perform a (selective) measurement has been investigated \cite{Jacobs12} and the total work for a closed thermodynamic measurement cycle explored \cite{Erez10}. The catalytic role of quantum superposition states when used in thermal operations has been uncovered \cite{Aberg14} and it has been shown that work can be drawn from quantum correlations \cite{Zurek03,delRio} in a thermodynamic setting, see Fig.~\ref{fig:overviewpic}. In particular, del Rio {\it et al.} \cite{delRio} showed that contrary to Landauer's principle, it is possible to \emph{extract} work while performing erasure of a system's state when the system is correlated to a memory. This can occur if and only if the initial correlations imply a negative conditional entropy, a uniquely quantum feature. The thermodynamic process does however now require operation on degrees of freedom external to the system, i.e. the memory's. 

%%%%%%%%%%%%%%%%%%
\begin{figure}[t]
	\includegraphics[width=0.45\textwidth]{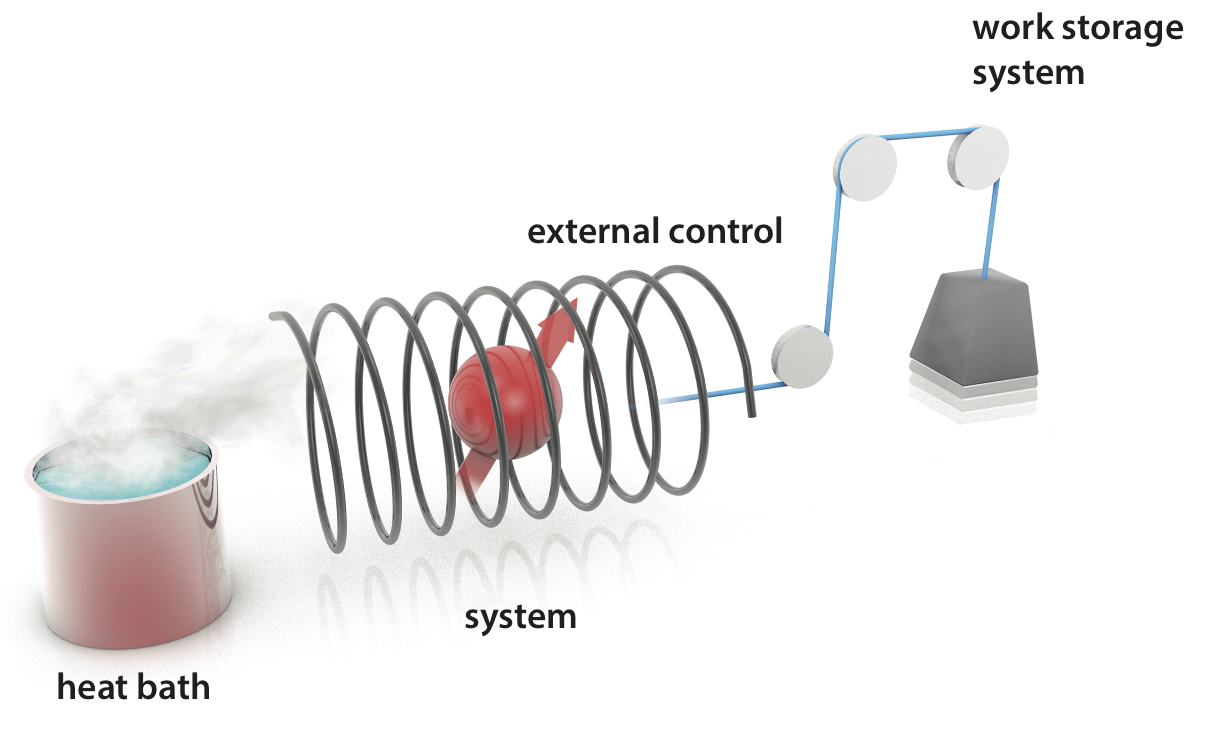}
        \caption{\label{fig:overviewpic} 
      	{\bfseries Thermodynamic setting.} 
        A system, depicted as a spin, interacts with a heat bath at temperature $T$, with which it exchanges \emph{heat}, and with controlled energy sources, illustrated as coil and weight, with which it exchanges \emph{work}. Work drawn from the system can be collected in a work storage system (weight) for future use.
        }
\end{figure}
%%%%%%%%%%%%%%%%%%

%\red{within a thermodynamic setting that involves a fixed temperature heat bath} 

Our motivation is here to shed light on the implications of performing a measurement on a quantum state that has coherences. We will consider this task in the thermodynamic setting of Landauer's erasure, involving a heat bath at fixed temperature $T$ and operation on $N \to \infty$ uncorrelated and identically prepared copies of the system (i.i.d. limit). This is of interest in the context of the quantum Jarzynski equality, for example, and will also be central for experiments testing quantum thermodynamic predictions in the future. To tackle this question we define the information-theoretic ``projection'' $\rho \to \eta^{\cal P} := \sum_k \Pi^{{\cal P}}_k \, \rho \, \Pi^{{\cal P}}_k$ for a given initial quantum state $\rho$ and a complete set of mutually orthogonal projectors $ \{\Pi^{{\cal P}}_k\}_k$.  Such state transformation can be seen as analogous to the state transfer of erasure, $\rho \to \ket{0}$, to a blank state $\ket{0}$. Physically, this projection can be interpreted as the result of an unread, or unselective \cite{Kurizki08}, measurement of an observable ${\cal P}$ that has eigenvector projectors $\{ \Pi^{{\cal P}}_k \}_k$.  In an unselective measurement the individual measurement outcomes are not recorded and only the statistics of outcomes is known. In the literature the implementation of unselective measurements is often not specified, although it is typically thought of as measuring individual outcomes, e.g. with a Stern-Gerlach experiment, see Fig.~\ref{fig:Blochpicture}a, followed by mixing.
The crux is that the information-theoretic projection $\rho \to \eta^{\cal P}$ can be implemented in many physical ways. The associated thermodynamic heat and work will differ depending on \emph{how} the projection was done and we will refer to the various realisations as ``thermodynamic projection processes''. One possibility is decohering \cite{decohering} the state in the so-called pointer basis, $\{\Pi^{\rm pointer}_k\}_k$, a thermodynamic process where an environment removes coherences in an uncontrolled manner resulting in no associated work. In general it is possible to implement the state transfer in a finely controlled fashion achieving optimal thermodynamic heat and work values. 

%%%%%%%%%%%%%%%%%%
\begin{figure*}[t]
%\vspace{0.5cm}
\hspace{-0.3cm}
\includegraphics[width=.99\textwidth]{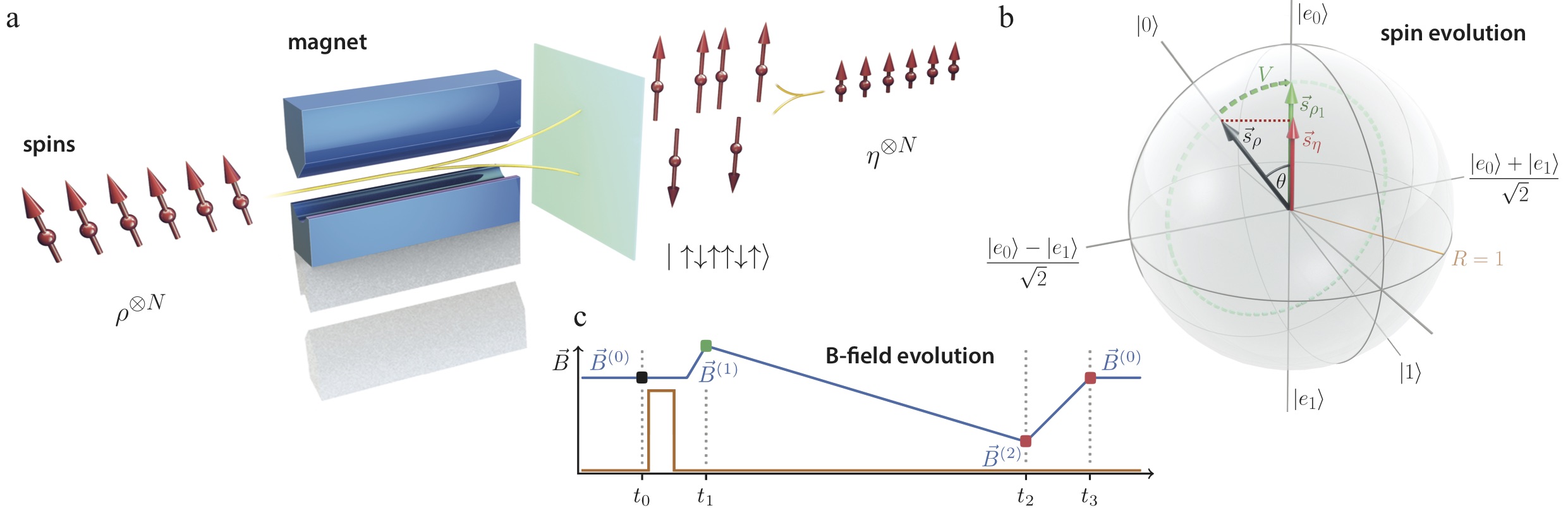}
       \caption{\label{fig:Blochpicture}
%    \justified
   {\bfseries Two physical realisations of a projection process.} 
           {\bfseries a,} $N$ identically prepared spin 1/2 particles in state $\rho^{\otimes N}$ pass a Stern-Gerlach magnet and a screen after which they emerge in either the spin-up or the spin-down beam. Recombining the two beams mixes the spins to the final state $\eta^{\otimes N}$ for $N \to \infty$.
Illustration of the spin example, showing the state evolution in {\bfseries b} and the B-field evolution in {\bfseries c}. 
The poles in the Blochsphere ({\bfseries b}) are the energy eigenstates $\ket{e_0}$ and $\ket{e_1}$ that are aligned and anti-aligned with an externally applied B-field (indicated in blue in {\bfseries c}), which initially is $\vec{B}\upix 0$ (black point in {\bfseries c}). 
In the first step the Blochvector $\vec{s}_\rho$ (black arrow in {\bfseries b}) of Emmy's initial state $\rho$ is rotated on the green-dashed circle to $\vec{s}_{\rho_1}$ (green arrow in {\bfseries b}).
The unitary rotation $V$ required for this step can be realised by applying a microwave pulse creating an additional B-field (indicated in orange in {\bfseries c}) in the direction orthogonal to the plane of the green circle. At the end of the first step the pulse is turned off and the external B-field is adjusted to $\vec{B}\upix 1$ (green point in {\bfseries c}).
The second step shortens $\vec{s}_{\rho_1}$ to $\vec{s}_{\eta}$ (red arrow in {\bfseries b}), the Blochvector of $\eta$ (superscripts $H$ have been omitted). The external B-field (blue in {\bfseries c}) decreases slowly to $\vec{B}\upix 2$ (red point at $t_2$ in {\bfseries c}). In the last step the B-field quickly returns to its initial value, $\vec{B}\upix 0$ (red point at $t_3$ in {\bfseries c}), while the state remains $\eta$. The angle between the Blochvectors of $\rho$ and $\eta$ is indicated by $\theta$.
} 
\end{figure*}
%%%%%%%%%%%%%%%%%%    

%% Main result %%%%%%%%%%%%
\section{Main result} \label{sec:result}

Of particular importance in thermodynamics is the projection $\rho \to \eta^H$ of the system's initial state $\rho$ onto the set of energy eigenstates $\{ \Pi^{H}_k \}_k$ of the system's Hamiltonian $H = \sum_k E_k \, \Pi^{H}_k$ with $E_k$ the energy eigenvalues. Here the state's off-diagonals with respect to the energy eigenbasis are removed - a state transformation that is frequently employed in quantum thermodynamic derivations and referred to as ``dephasing'' or ``measuring the energy''.
Our key observation is that there exists a thermodynamic projection process realising this transformation and allowing to draw from the quantum system a non-trivial  \emph{optimal average work} of
\begin{align} \label{eq:Wmeasure}
    \avg{W_\opt}  = k_B \, T \, \left( S (\eta^H) - S(\rho) \right).
\end{align}
Here $T$ is the temperature of the heat bath with which the system is allowed to interact, see illustration Fig.~\ref{fig:overviewpic}, $k_B$ is the Boltzmann constant and $S$ is the von Neumann entropy. Crucially, this work is strictly positive for quantum states with coherences. Extending the key observation to general projections $\rho \to \eta^{\cal P}$ one finds that  optimal thermodynamic projection processes can be implemented that allow to draw an average work of 
\begin{align}\label{eq:Wmeasuregeneral}
	\avg{W_\opt} = k_B \, T\,  \left( S (\eta^{\cal P} ) - S(\rho) \right) - \tr \left[H \, (\eta^{\cal P} -\rho)\right],
\end{align}
where an additional internal energy change term appears. 

The optimal work values stated in Eqs.~(\ref{eq:Wmeasure}) and (\ref{eq:Wmeasuregeneral}) are valid for processes applied to classical and quantum states alike. While for a classical ensemble the entropy change, $\Delta S^{\cal P} = S (\eta^{\cal P} ) - S(\rho)$, will be zero this is not so in the general quantum situation, where initial non-diagonal quantum states result in a strictly positive entropy change \cite{NielsenChuang}. We note that while the optimal work values are in principle attainable, practical implementations may be suboptimal resulting in a reduced work gain or a higher work cost. 
The physical meaning of $\Delta S^{\cal P}$ can be grasped by considering a lower bound \cite{Wolf13} on it, $\Delta S^{\cal P} \ge \frac{1}{2} \,  \left\|\rho - \frac{\mathbbm{1}}{d}\right\|_2^2 \, \Delta A^{\cal P}$,  see Appendix~\ref{app:entropybound}. Here $d$ is the dimension of the system and $ \left\| \, \cdot \, \right\|_2$ denotes the Hilbert-Schmidt norm. The first factor quantifies the distance of the initial state from the fully mixed state, while the second factor, $\Delta A^{\cal P}$, quantifies the angle between the diagonal basis of $\rho$ and the projection basis $\{ \Pi^{\cal P}_k \}_k$. These terms correspond to incoherent and coherent mixing contributions. The entropy change is non-trivially bounded only if the initial state is not an incoherent mixture with respect to that basis. The entropy bound is the largest for pure initial states whose basis is mutually unbiased with respect to $\{ \Pi^{\cal P}_k \}_k$. In this case the optimal entropy change is $\Delta S^{\cal P}  = k_B \, T \, \ln d$.

One may  wonder where the work has gone to. There are two equivalent approaches to the accounting of work.  In the present analysis the focus is on the work that the system exchanges, as done in statistical physics \cite{bioexperiments,TLH07,CTH09,Berut12,OSaira12}. In this approach it is often not explicitly mentioned where the work goes to, but the only place work can go to are the externally controlled energy sources. Similarly, the heat, i.e. the energy change minus the work, is established implicitly. For example, in the experimental realisation of classical Landauer erasure with a colloidal silica bead trapped in an optical tweezer \cite{Berut12}, the dissipated heat of erasure was calculated by knowing the applied tilting forces and integrating over the bead's dynamics. The second approach is to collect work in a separate work storage system \cite{SSP14}, as illustrated by the weight in Fig.~\ref{fig:overviewpic} and detailed in Appendix~\ref{app:workstorage}. Both the implicit and the explicit treatment of work are equivalent in the sense that the results obtained in one approach can be translated into the other. 

The thermodynamic assumptions made to prove Eq.~(\ref{eq:Wmeasuregeneral}) are congruent with current literature \cite{Landauer61,Esposito10, AG13, SSP14}; specifically they are:
(T0) an isolated system is a system that only exchanges work and not heat;
(T1) the validity of the \emph{first law} relating the internal energy change, $\Delta U$, of the system during a process to its average heat absorbed and work drawn, $\Delta U = \avg{Q^\abs} - \avg{W}$;
(T2) the validity of the \emph{second law} relating the system's entropy change to its average absorbed heat, $k_B T \, \Delta S \ge \avg{Q^\abs}$, when interacting with a bath at temperature $T$, with equality attainable by an optimal process;
(T3) the thermodynamic entropy to be equal to the von Neumann entropy in equilibrium as well as out-of-equilibrium, $S_{\rm th} = S_{\rm vN}$.
In addition we make the following standard quantum mechanics assumptions:
(Q0) an isolated system evolves unitarily;
(Q1) control of a quantum system includes its coherences.
Details of the proof are in Appendix~\ref{app:mainproof}.
We note that in the single-shot setting whole families of second laws apply \cite{Brandao13b, Lostaglio14} that differ from (T2) stated above. However,  in the limit of infinitely many independent and identically prepared copies of the system these collapse to the standard second law, (T2), on the basis of which Eq.~(\ref{eq:Wmeasuregeneral}) is derived. 

From the information-theory point of view the projections considered here constitute just one example of the larger class of trace-preserving completely positive (TPCP) maps characterising quantum dynamics. Of course, all TPCP maps can be interpreted thermodynamically with the assumptions stated above, resulting in an optimal average work given by a free energy difference. Erasure is another such map whose study forged the link between information theory and thermodynamics. The benefit of discussing ``projections'' here lies in the insight that this focus provides: it uncovers that coherences offer the potential to draw work making it a genuine and testable quantum thermodynamic feature. This work is non-trivial even when the thermodynamic process is operated on the system alone, not involving any side-information \cite{delRio} stored in other degrees of freedom. 

%% Example %%%%%%%%%%%%
\section{Example} \label{sec:example}

To gain a detailed understanding of thermodynamic projection processes that give the optimal work stated in Eq.~(\ref{eq:Wmeasure}) we now detail one such process for the example of a spin-1/2 particle (qubit), see illustration in Fig.~\ref{fig:Blochpicture}b and \ref{fig:Blochpicture}c as well as Appendix~\ref{app:spinexample}. This process consists of  a unitary evolution, a quasi-static evolution and a quench \cite{AG13}, and it is optimal for any finite-dimensional quantum system as shown in Appendix~\ref{app:general3step}. An experimentalist, Emmy, prepares the spin in a state $\rho = a \pure{0} + (1-a) \pure{1}$ ($a \ge \frac{1}{2}$ w.l.o.g.) exposed to an external magnetic field $\vec{B}\upix 0$ which she controls. The Hamiltonian associated with the system is $H =  - E \,  \pure{e_0} +  E \, \pure{e_1}$ where the energy difference between the aligned ground state, $\ket{e_0}$, and anti-aligned excited state, $\ket{e_1}$, is given by $2 E =  |\vec{\mu}| \, |\vec{B}\upix 0|$ with $\vec{\mu}$ the spin`s magnetic moment. Importantly, in general the spin state's basis, $\{\ket{0}, \ket{1} \}$, are superpositions with respect to the energy eigenbasis, $\ket{0} = \alpha^* \, \ket{e_0} + \beta^* \, \ket{e_1}$ and $\ket{1} = \beta \, \ket{e_0} - \alpha \, \ket{e_1}$ with $|\alpha|^2 +|\beta|^2=1$. For the optimal implementation of the projection $\rho \to \eta^H = \sum_{k = 0, 1} \pure{e_k} \, \rho \, \pure{e_k}$ Emmy now proceeds with the following three steps. 

Firstly, she isolates the spin from the bath and modifies external magnetic fields to induce a unitary rotation, $V = \ket{e_0} \bra{0} + \ket{e_1} \bra{1}$, of the spin into the energy basis. In nuclear magnetic resonance (NMR) \cite{Brazil} and pulsed electron spin resonance (ESR) experiments \cite{Gavin} such rotations are routinely realised by radio-frequency and microwave pulses respectively, as evidenced by Rabi oscillations. The power, duration and phase of such a pulse would be chosen to generate the spin-rotation along the green circle until the desired unitary $V$ is achieved. In the same step Emmy adjusts the strength of the external B-field such that the spin state $\rho_1 = V \, \rho \, V^{\dag}$ is Boltzmann-distributed at temperature $T$ with respect to the energy gap of the Hamiltonian at the end of the step, $H\upix 1$. In NMR or ESR the B-field magnitude is tuned quickly on the $T_1$ timescale to achieve the desired energy gap. 
In the second step, Emmy wants to implement a quasi-static evolution of the spin that is now thermal. She brings the spin in contact with the heat bath at temperature $T$ and quasi-statically adjusts the magnitude of the external B-field allowing the spin state to thermalise at all times. The final B-field, $\vec{B}\upix 2$, is chosen such that the final thermal state becomes $\eta^H$. In ESR this step can be realised by changing the external B-field slowly on the $T_1$ timescale so that the spin continuously equilibrates with its environment. Finally, Emmy isolates the spin from the environment and quickly changes the B-field to its original magnitude while the state remains $\eta^H$.

During Step 1 and 3 the system was isolated and the average work drawn is thus just the average energy change. During Step 2 the average work is the equilibrium free energy difference between the final and initial thermal states at temperature $T$, see Appendix~\ref{app:spinexample} for details. In NMR/ESR the work contributions drawn from the spin system are done on the external B-field and the microwave mode. This could be detected by measuring the stimulated emission of photons in the microwave mode or observing current changes induced by the spins dynamics \cite{Brazil,Gavin}.
The overall thermodynamic process has now brought the spin from a quantum state with coherences, $\rho$, into a state without coherences, $\eta^H$, while keeping the average energy of the spin constant. The net work drawn during the three steps adds up to $\avg{W} =  k_B T \, (S (\eta^H) - S(\rho))$ showing the attainability of the optimum stated in Eq.~(\ref{eq:Wmeasure}) for the spin-1/2 example. We note that Eq.~(\ref{eq:Wmeasure}) is also the maximal work that can be extracted from a \emph{qubit} state $\rho$ under \emph{any} transformation of the system that conserves its average energy, $U := \tr[H \, \rho]$, i.e. for qubits $\eta^H$ is the optimal final state under this condition. 

% Maxwell demon
We emphasise that this optimal implementation involves a finely tuned and controlled operation that relies on knowledge of the initial state $\rho$. This is akin to the situation considered in \cite{delRio} where knowledge of the initial global state of system and memory is required for optimal erasure with side-information. It is important to distinguish this situation from that of Maxwell demon's who has access to knowledge of the individual micro-states $\Pi^{\cal P}_k$ that make up the ensemble state $\eta^{\cal P}$, and who uses it to beat the second law \cite{Maruyama09}. In the scenario considered here there is no knowledge of the individual micro-states $\Pi^{\cal P}_k$ and the process does not violate the second law, on the contrary, it is derived from it.

%% Implications %%%%%%%%%%%%
\section{Implications} \label{sec:implications}

%% Single-shot analysis %%%%%%%%%%%%
\subsection{Single-shot analysis} \label{sec:singleshot}

The preceding discussion concerned the \emph{average} work that can be drawn when operating on an ensemble of $N\rightarrow\infty$ independent spins. 
%One may now ask how much work can be drawn in a single shot.  
This scenario contrasts with the single shot situation considered in a number of recent publications \cite{Brandao13b, delRio, Aberg13, HO13}.
In particular, two major frameworks \cite{Aberg13, HO13} have recently been put forward to  identify optimal \emph{single-shot} work extraction and work cost of formation in the quantum setting. These frameworks rely on a resource theory approach \cite{Janzing00} and make use of min- and max-relative entropies that originate from one-shot information theory. 
The optimal work extraction schemes of these frameworks require non-diagonal states to be decohered first to become diagonal in the energy basis. This decoherence step is assumed to not have an associated single-shot work. However, the present analysis of energy basis projections  showed that thermodynamic projection processes can yield positive average work, see Eq.~(\ref{eq:Wmeasure}). Therefore one may expect a positive work for removing coherences from a state $\rho$ in the single-shot setting, too. Since our focus is the $N \to \infty$ limit we will not aim to  construct the single-shot case. Nevertheless, to establish a notion of consistency between single-shot results \cite{Aberg13, HO13} and the average analysis presented here we now separate the projection into a diagonal part that can be analysed in the single-shot framework and a non-diagonal part that can be analysed in the average framework.
One possible decomposition of $\rho \to \eta^H$ is the split in three steps each starting and ending with Hamiltonian $H$: $\rho \overset{a}{\to} \rho_1 \overset{b}{\to} \tau^H  \overset{c}{\to} \eta^H$. Here $\rho_1$ is the rotated state defined above and $\tau^H = e^{- H/ k_B T}/ \tr[e^{- H/ k_B T}]$ is the thermal state for the Hamiltonian $H$ at temperature $T$. We can now use a single-shot analysis \cite{HO13} for Steps $b$ and $c$ that involve only states diagonal in the energy basis, giving a single-shot work contribution of $k_BT\,\ln2\,(\Dmin (\rho_1||\tau^H) - \Dmax (\eta^H||\tau^H))$, see Appendix~\ref{app:singleshot}. Here $\Dmin$ and $\Dmax$ are the min- and max-relative quantum entropies, respectively. Taking the limit of $N\to \infty$ copies for Steps $b$ and $c$ and adding the average work contribution for the initial non-diagonal rotation $a$, $\avg{W\upix a} = - \tr[(\rho_1 - \rho) \, H]$, one indeed recovers the optimal average work as stated in Eq.~(\ref{eq:Wmeasure}). 
After making public our results very recently a paper appeared \cite{David15} that derives the work that can be extracted when removing coherences in a single-shot setting. These results are in agreement with Eq.~(\ref{eq:Wmeasure}) and reinforce the above conclusion that coherences are a fundamental feature distinguishing quantum from classical thermodynamics.

%% Quantum fluctuation relations %%%%%%%%%%%%
\subsection{Quantum fluctuation relations} \label{sec:fluctuationrels}

The key observation was that thermodynamic projection processes can have a non-trivial work and heat. Another instance where this has interesting repercussions is the quantum Jarzynski equality \cite{Mukamel03, TLH07}. This is a generalisation of the prominent classical fluctuation relation valid for general non-equilibrium processes, which has been used to measure the equilibrium free energy surface inside bio-molecules by performing non-equilibrium pulling experiments \cite{bioexperiments}. The quantum version has recently been tested for the first time in a nuclear magnetic resonance experiment \cite{Brazil}.
The quantum Jarzynski relation, $\avg{e^{ {W/ (k_B \, T) }}}  = e^{- {\Delta F / (k_B \, T) }}$, links the fluctuating work, $W$, drawn from a system in individual runs of the same non-equilibrium process, with the free energy difference, $\Delta F$, of the thermal states of the final and initial Hamiltonian,  see Appendix~\ref{app:jarzynski}. In its derivation a system initially in a thermal state $\rho_0$ with respect to Hamiltonian $H\upix 0$ at temperature $T$ is first measured in the energy basis of $H\upix 0$. The Hamiltonian is then varied in time ending in $H\upix {\tau}$ generating a unitary evolution, $V$, of the system, see Fig.~\ref{fig:QuJarzynski}a. A second measurement, in the energy basis of $H\upix {\tau}$, is then performed to establish the final fluctuating energy. For each run the difference of the two measured energies has been associated with the fluctuating work \cite{TLH07}, $\Delta E = - W$. The experiment is repeated, each time producing a fluctuating work value. On average the work extracted from the system during the quantum non-equilibrium process turns out to be $\avg{W^{\rm unitary}} = U (\rho_0) - U(\rho_{\tau})$ where $\rho_{\tau} := V \, \rho_0 \, V^{\dag}$ is the ensemble's state after the unitary evolution, and similarly the average exponentiated work is calculated. 
The above identification $W := - \Delta E$ was made assuming that the system undergoes a unitary process with no heat dissipation. However, the need to acquire knowledge of the system's final energies requires the second measurement. The ensemble state is thus further altered from $\rho_{\tau}$ to $\eta_{\tau}$, the state $\rho_{\tau}$ with any coherences in the energy basis of $H\upix {\tau}$ removed. This step is not unitary - during the projection $\rho_{\tau} \to \eta_{\tau}$ the system may absorb heat, $\avg{Q^{\rm abs}}$, indicated in Fig.~\ref{fig:QuJarzynski}b, whose value depends on \emph{how} the process is conducted. Thus, while the energy difference for the projection is zero, $U(\eta_{\tau}) - U(\rho_{\tau}) =0$, for states $\rho_{\tau}$ with coherences the entropy difference is not trivial, $S(\eta_{\tau}) - S(\rho_{\tau}) = \avg{Q^{\rm abs}_{\rm opt}}/(k_B\,T)  \geq 0$. This implies that in an experimental implementation of the Jarzynski relation the work done by the system on average can be more than previously thought, $\avg{W_{\rm opt}} = \avg{W^{\rm unitary}} + k_B \, T \, (S(\eta_{\tau}) - S(\rho_{\tau}))$. We conclude that the suitability of identifying $W = - \Delta E$, and hence the validity of the quantum Jarzynski \emph{work} relation, depends on the details of the physical process that implements the second measurement. This conclusion is not at odds with previous experiments \cite{Brazil} which showed nature's agreement with $\avg{e^{- {\Delta E / (k_B \, T) }} }  = e^{- {\Delta F / (k_B \, T) }}$, involving the average of the exponentiated measured fluctuating energy. 

%%%%%%%%%%%%%%%%%%
\begin{figure}[t]
    \includegraphics[width=0.37\textwidth]{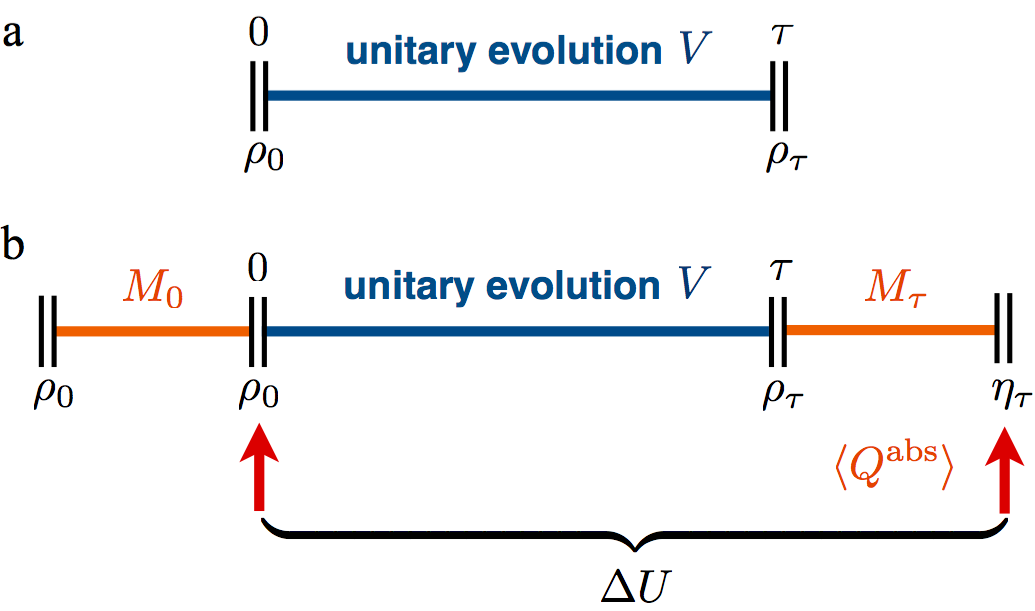}
        \caption{\label{fig:QuJarzynski} 
%        \justified 
       {\bfseries Dynamical steps in a quantum fluctuation experiment.}   
      {\bfseries a,} The quantum Jarzynski relation is described as characterising the non-equilibrium work of processes that start in a thermal state $\rho_0$ and evolve unitarily ($V$), driven by a changing Hamiltonian, reaching the final state $\rho_{\tau}$ at time $\tau$. This unitary process has no heat contribution.
        {\bfseries b,}  Illustration of three steps that are assumed in mathematical derivations of the quantum Jarzynski relation \cite{Mukamel03, TLH07}:  initial energy measurement of $H\upix 0$ indicated by $M_0$,  unitary evolution, and final energy measurement of $H\upix {\tau}$ indicated by $M_{\tau}$. The  ensemble state evolves here from $\rho_0$ to $\rho_{\tau}$ and then to $\eta_{\tau}$, the state $\rho_{\tau}$ with its coherences removed. The observed average energy difference $\Delta U = U(\eta_{\tau}) - U(\rho_0)$ encompasses both, the unitary process and the second projection process, and can in general contain a heat contribution $\avg{Q^{\rm abs}}$, in contrast to {\bfseries a}.
}
\end{figure}
%%%%%%%%%%%%%%%%%%

%% Correlated systems %%%%%%%%%%%%
\subsection{Correlated systems} \label{sec:corrsystems}

It is insightful to extend the thermodynamic analysis of projections to correlated systems. An experimenter may have access not only to the system $S$ but also the auxiliary systems $A$ with which $S$ is correlated \cite{delRio}. She can then perform a global operation, $\rho^{SA} \to \eta^{SA}$, that implements a projection $\{ \Pi^{\cal P}_k \}_k$ locally on the system $S$, i.e. $\rho^S : = \tr_A[\rho^{SA}] \to \eta^S : = \tr_A[\eta^{SA}] = \sum_k \Pi^{\cal P}_k \, \rho^S \, \Pi^{\cal P}_k$, while leaving the reduced state of the auxiliary system unchanged, i.e. $\rho^A := \tr_S[\rho^{SA}] = \tr_S[\eta^{SA}]$.
By doing so the experimenter can optimally draw the overall work $\avg{{\cal W}_\opt}  = k_B \, T \, \Delta {\cal S}^{\cal P}  - \Delta U^{\cal P}$, where $\Delta {\cal S}^{\cal P}$ is the entropy change for the state of system+auxiliary and $\Delta U^{\cal P}$ is still the energy change of the system \emph{alone}. This quantity can be re-written as  the sum of two terms: $\avg{W_\opt}$, the extractable work when operating on the system alone given in Eq.~(\ref{eq:Wmeasuregeneral}), and $\delta^{\cal P} (A:S)$, a positive term quantifying the quantum correlations between $S$ and $A$, see Appendix~\ref{app:correlations}. 
The latter contribution was previously identified in an inspiring paper by Zurek \cite{Zurek03}. It depends on the choice of projectors and is related to, but broader than, quantum discord \cite{discord} which is optimised over all possible projectors. This means that even states of system and auxiliary that can be considered classically correlated (i.e. no discord) provide an advantage for drawing work  contrasting with the erasure process where this only occurs for highly entangled states \cite{delRio}. The gap between these two sets of correlated states is an intriguing fact and calls for further exploration of the link between thermodynamics and information theory in the quantum regime.

%% Conclusions %%%%%%%%%%%%
\section{Conclusions} \label{sec:conclusions}

To conclude, erasure is not the only irreversible information processing task -- in the quantum regime a second fundamental process exists that mirrors Landauer's erasure. In contrast to the minimum heat limit of erasure, thermodynamic projection processes have a maximum work limit. While the former is non-zero for the erasure of classical \emph{and} quantum bits,  optimal thermodynamic projection processes have a non-zero work \emph{only} when applied to quantum states with coherences.
%\green{Thus coherences bring a genuinely quantum aspect to quantum thermodynamics - they provide a potential to draw work.}
The optimal average work stated in Eqs. (\ref{eq:Wmeasure}) and (\ref{eq:Wmeasuregeneral}) constitutes an experimentally accessible quantum thermodynamic prediction. Future experiments testing this optimal work may be pursued with current setups, for instance with NMR/ESR techniques \cite{Brazil,Gavin} or single atoms \cite{Maunz11, Volz11}, and promise to be accessible with other platforms entering the quantum regime, such as single electron boxes \cite{OSaira12}. Experiments will be limited by practical constraints, such as achieving a quasistatic process and obtaining the maximum work for pure states which may require, for instance, very large B-fields. 

The derivation of the optimal work value is mathematically straightforward, just like that of Landauer's principle. The result's significance is that it opens new avenues of thought and provides key input for the construction of a future quantum thermodynamic framework. For example, the developed approach opens the door to investigate the connection between microscopic statistical physics and macroscopic thermodynamics in the quantum regime. While it is straightforward to identify the thermodynamic work of quantum processes involving macroscopic ensembles, what is needed is a microscopic concept of work that when averaged, gives the correct macroscopic work. The microscopic work concept should be valid for general (open) quantum processes and quantum states (including coherences), and only require access to properties of the system. While single-shot approaches have discarded coherences \cite{Aberg13, HO13}, fluctuating work approaches cannot be applied directly to a system undergoing open quantum evolution \cite{CTH09}. 

The observation is also important from the experimental perspective as testing quantum thermodynamic predictions will involve measurement -- a projection process. We have argued that measurements, such as those required in establishing the Jarzynski equality, are not necessarily thermodynamically neutral. Indeed, they can be implemented in different physical ways and in general play an active role in thermodynamics, contributing a non-zero average heat and work. This new perspective gives physical meaning to the change of entropy in the debated quantum measurement process - it provides a capacity to draw work. Specifically, work can be drawn when  \emph{coherences} of a state are removed during an unselective measurement. 

Finally, it is apparent that optimal thermodynamic projection processes require use of knowledge of the initial state $\rho$, i.e. its basis and eigenvalues. One may be inclined to exclude use of such knowledge, particularly when considering projections in the context of measurement which is often associated with the acquisition of knowledge. Such restriction would necessarily affect the set of assumptions (T0-T3, Q0-Q1) in the quantum regime. These could be changed, for example, to the second law not being possible to saturate (cf. T2) or to choosing a new quantum non-equilibrium entropy that only considers the state's diagonal entries (cf. T3). The latter would mean a departure from standard quantum information theory where entropies are basis-independent. Thus whichever approach one takes - not making or making a restriction - quantum coherences will contribute a new dimension to thermodynamics. They either lead to non-classical work extraction or they alter the link between information theory and thermodynamics in the quantum regime. The line drawn here between the assumptions (T0-T3, Q0-Q1) and results (Eqs.~(\ref{eq:Wmeasure}) and (\ref{eq:Wmeasuregeneral})) establishes a frame for this possibility to be investigated.

\acknowledgements

We thank T. Deesuwan, M. Wolf and R. Renner, G. Morley, R. Uzdin and D. Reeb for insightful discussions and J. Gemmer, R.  Renner, S. Horsley and T. Philbin for critical reading of the manuscript. 
P.K. acknowledges support from the Swiss National Science Foundation (through the National Centre of Competence in Research `Quantum Science and Technology') and the European Research Council (grant 258932).
J.A. is supported by the Royal Society and EPSRC. J.A. thanks the Isaac Newton Institute in Cambridge where part of this work was conceived for the stimulating environment and kind hospitality. This work was supported by the European COST network MP1209.

%% APPENDIX %%%%%%%%%%%%
\appendix

\section{Proof of Eq.~(\ref{eq:Wmeasuregeneral})} \label{app:mainproof}

Using the first law (T1) the average work drawn in a thermodynamic projection process $\rho \to \eta^{\cal P}$ is simply $\avg{W} =  \avg{Q^\abs} - \Delta U^{\cal P}$, where $\Delta U^{\cal P}$ is the average energy change for that process. Relating the average heat absorbed by the system during the process to its entropy change one then obtains $\avg{W} \le k_B T \, \Delta S^{\cal P}$ (T2). Here $\Delta S^{\cal P}$ is the difference of von Neumann entropies of the system's state before and after the projection (T3). The average work drawn is thus $\avg{W} \le k_B T \, \Delta S^{\cal P} - \Delta U^{\cal P}$, where the entropy change is non-negative and the energy change can be either positive or negative. The stated \emph{optimal} work, $\avg{W_\opt}$, is achieved when the inequality is saturated by an optimal process (T2) the implementation of which may require knowledge of the initial state and control of coherences (Q1). In the special case of a projection onto the energy eigenbasis $\{ \Pi^H_k\}_k$ the internal energy change is zero, $\Delta U^H = 0$, and one obtains Eq.~(\ref{eq:Wmeasure}).

\section{Spin example} \label{app:spinexample}

We here detail the three steps of the optimal thermodynamic projection process of the spin system discussed in the main text. Emmy starts with a spin in state 
\begin{align}
	\rho = a\,\pure{0} +(1-a)\,\pure{1},
\end{align}
with Blochvector
\begin{align}
	\vec{s}_\rho := \tr[\rho \, \vec{\sigma}] = (2 a -1) \,  \hat{0},
\end{align}
where $\vec{\sigma}$ is the vector of the three Pauli matrices, $\sigma_1, \sigma_2$ and $\sigma_3$ and $\hat{0} = \tr[\pure{0} \, \vec{\sigma}]$ is the unit vector in the Blochsphere pointing from the origin to the state $\ket{0}$, see Fig.~\ref{fig:Blochpicture}b. We assume without loss of generality that $a \ge \frac{1}{2}$. If this was not the case, the labels $\pure{0}$ and $\pure{1}$ should be interchanged. The spin's initial Hamiltonian is given by $H=- E\,\big(\Pi^H_0 - \Pi^H_1\big)$, where $\Pi^H_k = \pure{e_k}$ with $k=0,1$ are the rank-1 projectors onto the two energy eigenstates and $E >0$. This Hamiltonian arises when the spin is exposed to an external magnetic field $\vec{B}^{(0)}$. The energy separation of the aligned ground state, $\ket{e_0}$, and anti-aligned excited state, $\ket{e_1}$, is $2E = 2 \, |\vec{\mu}| \, |\vec{B}^{(0)}|$, where $\vec{\mu}$ is the magnetic moment of the spin. A general initial state $\rho$ is not diagonal in the basis $\{ \ket{e_0}, \ket{e_1}\}$, in other words the spin's eigenstates are superpositions with respect to the energy eigenbasis, $\ket{0} = \alpha^* \, \ket{e_0} + \beta^* \, \ket{e_1}$ and $\ket{1} = \beta \, \ket{e_0} - \alpha \, \ket{e_1}$ with $|\alpha|^2 +|\beta|^2=1$. The spin's Blochvector, $\hat{0}$, is then \emph{not} parallel to the B-field, $\vec{B}^{(0)}$. Emmy wants to obtain the state where the coherences with respect to the energy basis $\{\ket{e_0}, \ket{e_1}\}$ have been removed,
\begin{align}\begin{split}
	\eta^H &= \sum_{k=0,1} \Pi^H_k\, \rho \, \Pi^H_k \\
	&= \sum_{k=0,1} \tr[\Pi^H_k\,\rho ]\,\Pi^H_k = p\ \Pi^H_0 + (1-p)\ \Pi^H_1,
\end{split}\end{align}
where $p=\tr[\Pi^H_0\,\rho]$ is the probability for obtaining outcome $\ket{e_0}$ in a measurement of $H$. Here $\eta^H$ has the same average energy as the initial state $\rho$, 
\begin{align}\begin{split}
	\tr[\eta^H\, H] &= \sum_{k=0,1} \tr[\,\Pi^H_k\, \rho\, \Pi^H_k\, H\,] \\
	&= \sum_{k=0,1} \tr[\,\rho\, \Pi^H_k\, H\, \Pi^H_k\,] = \tr[\rho\, H],
\end{split}\end{align}
where we used orthonormality and completeness of the projectors $\{ \Pi^H_0, \Pi^H_1 \}$. The Blochvector of the final state is defined as 
\begin{align}
	\vec{s}_{\eta} := \tr[\eta^H \, \vec{\sigma}] = (2 p -1) \,  \hat{e}_0 
\end{align}
where $\hat{e}_0 = \tr[\pure{e_0} \, \vec{\sigma}]$ is the unit vector in the Blochsphere pointing from the origin to the state $\ket{e_0}$. Since geometrically the mapping  $\rho \rightarrow \eta^H$ is a projection of $\vec{s}_\rho$ onto the vertical axis in the Blochsphere, the length of the final Blochvector, $\vec{s}_{\eta}$, is shorter than the initial Blochvector, $\vec{s}_\rho$. This shortening is associated with an entropy increase \cite{NielsenChuang}. When describing the process in the following we assume that $p\geq\frac{1}{2}$ in accordance with the illustration in Fig.~\ref{fig:Blochpicture}b. At the end of this section we come back to the case $p<\frac{1}{2}$.

Emmy proceeds with three steps made up of quantum thermodynamic primitives with known work and heat contributions \cite{AG13}, Fig.~\ref{fig:Blochpicture}b:
\begin{align*}
	(\rho, H) \stackrel{1}{\longrightarrow} (\rho_1, H\upix 1) \stackrel{2}{\longrightarrow} (\eta^H, H\upix 2)\stackrel{3}{\longrightarrow} (\eta^H, H).
\end{align*}
In the first step, $(\rho, H) \stackrel{1}{\longrightarrow} (\rho_1, H\upix 1)$, Emmy isolates the spin from the bath and rotates the B-field such that the variation of the field induces a unitary transformation of the spin into the energy eigenbasis, with unitary $V = \ket{e_0}\bra{0} + \ket{e_1}\bra{1}$. The state after this step is 
\begin{align} \label{eq:rho1}
	\rho_1 = V\,\rho\, V^\dagger = a\ \Pi^H_0 +(1-a)\ \Pi^H_1.
\end{align}
The B-field after this step, $\vec{B}\upix 1$, is chosen such that the new Hamiltonian $H\upix 1 = - E\upix 1\,\big( \Pi^H_0 - \Pi^H_1\big)$ has eigenvalues $E\upix 1 = |\vec{\mu}| \, |\vec{B}\upix 1| = \frac{k_B \,T}{2}\,\ln\frac{a}{1-a}$, where $k_B$ is the Boltzmann constant and $T$ is the temperature of the heat bath that Emmy will use in the next step. This choice of the B-field makes the state $\rho_1$ a thermal state with respect to $H\upix 1$ at temperature $T$, i.e. $\rho_1 = \frac{e^{-\beta H\upix 1}}{Z\upix 1}$ with $Z\upix 1 = \tr\big[e^{-\beta H\upix 1}\big]$ and inverse temperature $\beta = \frac{1}{k_B \, T}$. Since the system was isolated in the first step no heat exchange was possible and the entire average energy change of the system is drawn from the system as work $\avg{W\upix 1}  = - \tr[\rho_1 \, H\upix 1 - \rho \, H]$.  Physical constraints may make this process difficult to realise, for instance, pure initial states would require a B-field, $B\upix 1$, of infinite magnitude because thermal states at any finite temperature are only pure if the energy gap is infinite. In this case there is a trade-off between the maximal magnitude the B-field can reach and the precision with which the process is carried out. In the following we assume that the maximal B-field is large enough to make the error in the precision negligibly small.

In the second step, $(\rho_1, H\upix 1) \stackrel{2}{\longrightarrow} (\eta^H, H\upix 2)$, Emmy brings the spin in contact with the bath at temperature $T$, not affecting the spin's state as it is already thermal. She then quasi-statically decreases the magnitude of the B-field, while keeping the system in contact with the bath at all times, such that the final Hamiltonian is $H\upix 2 = -E\upix 2\,\big( \Pi^H_0 - \Pi^H_1\big)$ where the B-field is chosen such that $E\upix 2 = \frac{k_B \,T}{2}\,\ln\frac{p}{1-p}$ where $p$ is the probability of measuring $-E$ in the initial state, $\rho$. The quasi-static evolution means that the system is thermalised at all times, arriving in the final state 
\begin{align}
	\frac{e^{-\beta H\upix 2}}{Z\upix 2} = p\ \Pi^H_0 + (1-p)\ \Pi^H_1 = \eta^H
\end{align}
which is thermal with respect to $H\upix 2$ where $Z\upix 2 = \tr\big[e^{-\beta H\upix 2}\big]$. This state is exactly $\eta^H$, the desired final state after the projection. The quasi-static process considered here has a known average work given by the free energy difference \cite{Gemmer, AG13, Aberg13}, $\avg{W\upix 2} = - \big(\, F^{(2)}_{T}(\eta^H) - F^{(1)}_{T}(\rho_1) \, \big)$ where $F^{(2)}_{T}(\eta^H) = \tr[\eta^H \,H\upix 2] - k_B\, T\, S(\eta^H)$ and $F^{(1)}_{T}(\rho_1) = \tr[\rho_1 \,H\upix 1] - k_B\, T\, S(\rho_1)$ are standard thermal equilibrium free energies and $S$ is the von Neumann entropy defined by $S(\eta^H) = -\tr[\eta^H\, \log \eta^H]$ and likewise for $\rho_1$.\\

Finally, in the third step, $(\eta^H, H\upix 2)\stackrel{3}{\longrightarrow} (\eta^H, H)$, Emmy isolates the spin from the bath and changes the energy levels of the Hamiltonian such that it becomes the initial Hamiltonian $H$ again. This step is done quickly so that the state of the spin does not change. Because the system is isolated the energy change in this step is entirely due to work $\avg{W\upix 3} = - \tr[\eta^H \, (H - H\upix 2)]$.
In total, this thermodynamic process has brought the spin from the quantum state $(\rho,H)$ to the state $(\eta^H,H)$ while not changing the energy of the spin, $ \tr[(\rho-\eta^H)\,H] = 0$. The overall average work drawn from the spin is 
\begin{align}\begin{split} \label{eq:spinWext}
	\avg{W} &= \avg{W\upix 1} + \avg{W\upix 2} + \avg{W\upix 3}  \\
	&= -\tr[\rho_1\,H\upix 1] + \tr[\rho\,H] -\tr[\eta^H\,H\upix 2] + k_BT\,S(\eta^H) \\
	&\quad \, + \tr[\rho_1\, H\upix 1] -k_BT\, S(\rho_1) -\tr[\eta^H\,(H-H\upix 2)] \nonumber \\
	&= k_BT\, (S(\eta^H) - S(\rho_1)) \nonumber \\
	&= k_BT\, (S(\eta^H) - S(\rho)) \equiv \avg{W_\opt}, \nonumber
\end{split} \end{align}
showing the optimality of the three step process for the spin example,  cf. Eq.~(\ref{eq:Wmeasure}). 

The above example assumed $p\geq\frac{1}{2}$. Suppose now that the probability to find the final state $\eta^H$ in the ground state $\ket{e_0}$ with respect to the Hamiltonian $H$ was smaller than to find it in the excited state $\ket{e_1}$, i.e. $p<\frac{1}{2}$. Proceeding through the three steps described one finds that the mathematics is exactly the same. In particular, after Step 2 $\eta^H$ is a thermal state with respect to $H \upix 2$ at inverse temperature $\beta$.
The only difference occurs in the interpretation as for the Hamiltonian $H\upix 2$ the ground state is $\ket{e_1}$ because $E \upix 2 = \frac{k_B \, T}{2}\ln \frac{p}{1-p} < 0$ is negative. This is feasible by making the B-field $B \upix 2$ negative, thus swapping the ground and the excited state. Consequently the analysis  above and the resulting expression of the total extracted work remain the same.\\

The work extracted in the individual steps of the thermodynamic projection process can be either positive or negative, depending on the initial state $\rho$, the Hamiltonian $H$ and the temperature $T$ of the heat bath. Their sum, $\avg{W}$, is strictly positive whenever the initial state was not diagonal in the energy eigenbasis, a consequence of the entropy increase \cite{NielsenChuang} from $\rho$ to $\eta^H$. On the other hand for classical states -- all diagonal in the energy basis -- the optimal work for such a projection is always zero. 
Appendix~\ref{app:general3step} extends the optimality proof of the above three step process to the general finite-dimensional case.\\

%%%%%%%%%%%%%%%%%
\textbf{A note on optimal work extraction at constant average energy.} 
Assume we are given an initial state $\rho$ and a non-degenerate Hamiltonian $H$ for a quantum system. The goal is to find the maximal work that can be obtained in a thermodynamic process that involves a heat bath at temperature $T$ under the restriction that the average energy of the system after the process is the same as it was before the process, $U:=\tr[\rho \, H]$. Using Eq.~(\ref{eq:Wmeasuregeneral}) together with the condition that internal energy does not change this amounts to finding the maximum over the set of states $\sigma$ with $\tr[\sigma \, H] = U$, 
\begin{align}\begin{split}
	\avg{W^{\max}} &:= \max_{\sigma} k_B \, T (S(\sigma) - S(\rho)) \\
	&= k_B \, T (\max_{\sigma} S(\sigma) - S(\rho)).
\end{split}\end{align}
It is well-known that at a fixed expectation value of an observable $H$ the Gibbs states $\sigma_{\lambda} = {e^{- \lambda H} / \tr[e^{- \lambda H} ]}$  are the states of maximal entropy \cite{PuszWoronowicz, Lenard}. Here the parameter $\lambda$ has to be chosen such that the energy of the Gibbs state matches $U$ - therefore there is only one $\sigma_{\lambda^*}$, with $\lambda^*$ such that $\tr[ \sigma_{\lambda^*} \, H] \equiv U$, that gives the maximum here. The maximum entropy is then
\begin{align}\begin{split}
S(\sigma_{\lambda^*}) &= - \tr[ \sigma_{\lambda^*} \ln \sigma_{\lambda^*}] = - \tr[ \sigma_{\lambda^*} (- \lambda^* H - \ln \tr[e^{- \lambda^* H} ]) ]\\
&= \lambda^* \, U + \ln \tr[e^{- \lambda^* H} ]
\end{split}\end{align}
and the maximum average work that can be extracted from $\rho$ at fixed average energy $U$ is then
\begin{align}
	\avg{W^{\max}} =  k_B \, T ( \lambda^* \, U + \ln \tr[e^{- \lambda^* H}] - S(\rho)) ,
\end{align}

For the special case that the system is a qubit (two-dimensional) the optimum Gibbs state for work extraction $\sigma_{\lambda^*}$ is identical to the projected state $\eta^H = \sum_{k=0,1} \Pi\upix k \, \rho \, \Pi\upix k$ and the maximal work that can be drawn from a system starting in state $\rho$, while keeping its average energy fixed, is $\avg{W_{\opt}}$ in Eq.~(\ref{eq:Wmeasure}). To see this we expand $H = E\upix 0 \, \Pi\upix 0 +  E\upix 1 \, \Pi\upix 1$ and $U = p \, E\upix 0 + (1-p) E\upix 1$ with $p:= \tr[\rho \, \Pi\upix 0]$. Now here $\lambda^*$ must be chosen such that $\sigma_{\lambda^*} = {e^{- \lambda^* H} / \tr[e^{- \lambda^* H} ]} = p \, \Pi\upix 0 +  (1 - p) \, \Pi\upix 1$, i.e. $p= e^{- \lambda^* E\upix 0} / (e^{- \lambda^* E\upix 0}  + e^{- \lambda^* E\upix 1})$, so that $\sigma_{\lambda^*}$ has just the right energy $\tr[\sigma_{\lambda^*}  \, H] = U$. On the other hand the projection state has the same expansion, $\eta^H = p \, \Pi\upix 0 +  (1 - p) \, \Pi\upix 1 = \sigma_{\lambda^*}$. We note that this coincidence is not true for higher dimensional systems where the energy-projected state $\eta^H$ will in general have a non-monotonous, non-canonical distribution in its energy eigenbasis, while $\sigma_{\lambda^*}$ must be Gibbs-distributed.

Considering the illustration in Fig.~\ref{fig:Blochpicture}b, the qubit states $\sigma$ fulfilling the condition $\tr[\sigma\,H]$ are located on the plane which contains $\rho$ and is perpendicular to the $\ket{e_0}$-$\ket{e_1}$-axis. On the other hand, in the Bloch picture a state has higher entropy the closer it is to the center of the sphere. Hence, the optimal final state when extracting work from $\rho$ while conserving the average energy of the system is the state $\rho$ projected to the $\ket{e_0}$-$\ket{e_1}$-axis, i.e. $\eta^H$.

\section{Work storage system} \label{app:workstorage}

In the previous section it was stated that work can be drawn from a quantum system when undergoing a thermodynamic projection process. But where has the work gone to?

There are two approaches of accounting for work that are mirror images to each other. One approach \cite{Aberg13,MT11,Brazil,TLH07,Gemmer,Esposito10,AG13,Seifert,OSaira12} focusses on the work that the system exchanges, as described above. Here it is often not explicitly mentioned where the work goes to, but the only place it can go to are the externally controlled energy sources, see Fig.~\ref{fig:overviewpic}. 
Another way of accounting is to explicitly introduce a work system to store the work drawn \cite{HO13,SSP14}. One way of doing so in an average scenario is to introduce \cite{SSP14} a `suspended weight on a string', described by a quantum system $W$, that could be raised or lowered to store work or draw work from it. Specifically, the Hamiltonian of the work storage system is defined as $H^W=m \, g \, x$, representing the energy of a weight of mass $m$ in the gravitational field with acceleration $g$ at height $x$. In addition, an explicit thermal bath $B$ is introduced \cite{PuszWoronowicz, Lenard} consisting of a separate quantum system in a thermal (or Gibbs) state $\tau^B$. Both, the explicit work storage system and the heat bath are illustrated in Fig.~\ref{fig:overviewpic}.
In the latter approach the total system starts in a product state of system $S$ (e.g. spin), bath $B$, and weight $W$, $\rho^{SBW} = \rho^S\otimes \tau^B \otimes \omega^W$, which together undergo \emph{average energy conserving unitary evolution} with $V$:
\begin{align}
\rho^{SBW} \longmapsto\ \sigma^{SBW} = V\, \rho^{SBW} V^\dagger.
\end{align}
The assumption is that the total Hamiltonian is the sum of local terms, $H^{SBW} = H^S + H^B + H^W$. The average energy conservation constraint then reads $\tr[\,(\sigma^{SBW} - \rho^{SBW})\, H^{SBW}] = 0$ and the average work extracted to the work storage system, $\avg{W}$, is identified with
\begin{align}
	\tr[\, (\sigma^{SBW} - \rho^{SBW})\, H^W \,] = \avg{W}.
\end{align}
Both the implicit and the explicit treatment of work are equivalent in the sense that the results obtained in one language can be translated in the other and vice versa. In particular, the implicit description used in this text \cite{AG13} has an equivalent explicit formulation \cite{SSP14}.

In the next section we will discuss single-shot extractable work in a projection process. 
One possibility to define work in this context is to chose the explicit work storage system as a `work qubit' with a specific energy gap which has to be in a pure energy eigenstate before and after the protocol \cite{HO13}. This way it is guaranteed that full knowledge about its state is present at all times and the work is stored in an ordered form. In this scenario the allowed unitary operations $V$ on the whole system $SBW$ have to conserve the energy exactly, not only on average, which amounts to $[V, H^{SBW}]=0$.

\section{General three step process} \label{app:general3step}

It is straightforward to generalise the proof of optimality from the two-dimensional spin-1/2 example to thermodynamic projection processes in dimension $d$. Again the projectors $\{ \Pi^H_k \}_k$ map onto the energy eigenspaces of the Hamiltonian, $H = \sum_{k} E\upix 0_k \, \Pi^H_k$, where $E\upix 0_k$, $k=1,\dots,d$, are the energy eigenvalues. A general initial state can be written as $\rho = \sum_{j=1}^d a_j \, \pure{j}$ where $a_j \ge 0$ are probabilities, $\sum_{j=1}^d a_j = 1$, $\pure{j}$ are rank-1 projectors on the corresponding eigenvectors $\ket{j}$, and $j=1,\dots,d$. 
A unitary operation, $V$, is now chosen such that it brings the initial configuration $(\rho, H)$ into the new diagonal and thermal configuration $(\rho_1 = V \, \rho \, V^{\dag}, H\upix 1)$ where $\rho_1 = \sum_{k} a_k \, \Pi^H_k$ and $H\upix 1 = \sum_{k} E\upix 1_k \, \Pi^H_k$. The new energy eigenvalues, $E\upix 1_k$, are adjusted such that the probabilities $a_k$ are thermally distributed with respect to $H\upix 1$ for the bath temperature $T$.  Adjusting the Hamiltonian eigenvalues while letting the state thermalise at all times now results in a isothermal quasi-static operation from $(\rho_1, H\upix 1)$ to $(\eta^H = \sum_{k} p_k \, \Pi^H_k, H\upix 2 = \sum_{k} E\upix 2_k \, \Pi^H_k)$. Here the new energy eigenvalues, $E\upix 2_k$, are chosen to be thermal (at $T$) for the state's probabilities which are given by $p_k = \tr[\rho \, \Pi^H_k]$. Finally, a quench brings the thermal configuration $(\eta^H, H\upix 2)$ quickly into the non-equilibrium state $(\eta^H, H)$. 
The average work for this overall process is $\avg{W} = \sum_{j=1}^3 \avg{W\upix j}$ 
where $\avg{W\upix 1} = - \tr[H\upix 1 \, \rho_1 - H \, \rho]$ and $\avg{W\upix 3} = - \tr[H \, \eta^H - H\upix 2 \, \eta^H]$ because the first and third steps are unitary (Q0+T0). The quasistatic step's work is \cite{AG13, Aberg13}  $\avg{W\upix 2} = - F\upix 2_{T} + F\upix 1_{T}$ where $F\upix 1_{T} = \tr[H\upix 1 \, \rho_1] - k_B \, T \, S(\rho_1)$ is the thermal equilibrium free energy for Hamiltonian $H\upix 1$, and similarly, $F\upix 2_{T} = \tr[H\upix 2 \, \eta^H] - k_B \, T \, S(\eta^H)$. Summing up and using $\tr[H \, (\rho - \eta^H)] = 0$, one obtains 
\begin{align}
\avg{W} = k_B \, T (S(\eta^H) - S(\rho)) \equiv \avg{W_\opt} ,
\end{align} 
concluding the optimality proof of the  process sequence.

\section{Lower bound on entropy change} \label{app:entropybound}

The entropy change during a projection with projectors $\{ \Pi^{\cal P}_k=\pure{\phi_k}\}_k$ can be lower bounded. In the following, $\| B \|_2 = \sqrt{\tr[B^\dagger B]}$ denotes the Hilbert-Schmidt norm of a linear operator $B$ acting on a $d$-dimensional Hilbert space describing the quantum system of interest. The lower bound reads \cite{acknowledgeWolf}
\begin{align}\label{eq:lowerbound}
	\Delta S^{\cal P} := S(\eta^{\cal P}) -S(\rho) \geq \frac{1}{2} \, \left\| \rho - \frac{\mathbbm{1}}{d}\right\|_2^2 \, \Delta A^{\cal P}.
\end{align}
Here, $S$ is the von Neumann entropy, $\rho$ the initial state and $\eta^{\cal P} = \sum_k \Pi^{\cal P}_k\, \rho\, \Pi^{\cal P}_k$ the final state after the projection process. Furthermore, $\Delta A^{\cal P}$ is the second smallest eigenvalue of the matrix $\mathbbm{1} - M^TM$ where $M$ is the doubly stochastic matrix given by the entries $M_{kl} = |\avg{\phi_k |\, l\,}|^2$ and $\{\ket{\,l\,}\}_l$ is the eigenbasis of the initial state $\rho$. 

Considering the two main terms on the right hand side of Eq.~(\ref{eq:lowerbound}) separately, $\| \rho - \mathbbm{1}/d \|_2 ^2$ and $\Delta A^{\cal P}$, it becomes apparent that they characterise different properties of the initial state. The first term measures the distance of $\rho$ to the fully mixed state, $\mathbbm{1}/d$, and quantifies the purity of $\rho$. It is maximal for all pure initial states and zero if and only if $\rho = \mathbbm{1}/d$. In the special case of a spin-1/2 system it can be directly related to the length of the Bloch vector describing $\rho$ in the Bloch representation, a link that will be established below. The second term, $\Delta A^{\cal P}$, is related to the overlap of the eigenbasis of $\rho$, $\{\ket{\,l\,}\}_l$, and the projective basis, $\{\ket{\phi_k}\}_k$. It is zero if they are the same and maximal if they are mutually unbiased \cite{Appleby, Durt}. This can be seen as follows: if the two bases are the same, then  the matrix $M$ is a permutation and consequently $M^TM$ is the identity. In this case, $\mathbbm{1}-M^TM$ is the zero matrix and thus $\Delta A^{\cal P} = 0$. If $\{\ket{\phi_k}\}_k$ and $\{\ket{\,l\,}\}_l$ are mutually unbiased, i.e. if they fulfil $|\avg{\phi_k |\, l\,}|^2 = 1/d$ for all $k,l$, the matrix $M$ and thus also $M^TM$ is a rank-1 projector onto the space spanned by the vector $(1,\dots,1)^T$. Hence, $\mathbbm{1}-M^TM$ has eigenvalues $\{0,1,\dots,1\}$. One finds that the second largest eigenvalue is $\Delta A^{\cal P} = 1$, which is also the maximal eigenvalue the matrix $\mathbbm{1}-M^TM$ can have \cite{Ando}.

In the special case of the spin-1/2 system shown in Fig.~\ref{fig:Blochpicture}b, the bound reads $\Delta S^H (\rho) \ge \frac{1}{4} \, \left| \vec{s}_\rho \right|^2 \, \sin^2\theta$, where $\vec{s}_\rho$ is the Bloch vector of the initial state and $\theta$ is the angle between the eigenbasis of $\rho$, $\{\ket{0},\ket{1}\}$, and the projective energy basis, $\{\ket{e_0},\ket{e_1}\}$. Let $\rho = a\,\pure{0} + (1-a)\,\pure{1}$ be the initial state of the qubit. Furthermore, let $\eta^H = p\,\pure{e_0} +(1-p)\,\pure{e_1}$ be the final state after the energy projection, where $p = \tr[\,\pure{e_0} \, \rho\,]$ is the probability to obtain $\ket{e_0}$. As argued in Appendix~\ref{app:spinexample} w.l.o.g. we can assume that $a\geq\frac{1}{2}, p\geq\frac{1}{2}$. In the Bloch representation one can write $\rho = \frac{1}{2}(\mathbbm{1} + \vec{s} \cdot \vec{\sigma})$ and $\eta^H = \frac{1}{2}(\mathbbm{1} + \vec{t} \cdot \vec{\sigma})$. Here we used a different notation for the Bloch vectors of $\rho$, $\vec{s}:=\vec{s}_\rho$, and $\eta^H$, $\vec{t}:=\vec{s}_{\eta}$, for readability.
The Pauli matrices are self-adjoint and fulfil $\tr [\sigma_i \sigma_j] = 2\delta_{ij}$. Hence we find
\begin{align}\begin{split}
\left\|\rho-\frac{\mathbbm{1}}{2}\right\|_2^2 &= \tr\left[ \left(\frac{1}{2}\vec{s}\cdot\vec{\sigma}\right)\left(\frac{1}{2}\vec{s}\cdot\vec{\sigma}\right) \right]\\
&= \sum_{i,j=1}^3 \frac{1}{4}s_{i}s_{j}\tr[\sigma_i\sigma_j] = \frac{1}{2} |\vec{s}|^2,
\end{split}\end{align}
where $|\cdot|$ is the Euclidean metric in $\mathbb{R}^3$. This proves the form of the first factor in the bound. For the factor $\Delta A^H$ notice that by assumption $a\geq\frac{1}{2}, p\geq\frac{1}{2}$ and thus we can write $\pure{e_0} = \frac{1}{2}(\mathbbm{1} + \frac{\vec{t}}{|\vec{t}|}\cdot\vec{\sigma})$ and $\pure{0} = \frac{1}{2}(\mathbbm{1} + \frac{\vec{s}}{|\vec{s}|}\cdot\vec{\sigma})$. Therefore
\begin{align}\begin{split}
\label{eq:overlap}
\ephi &= \tr[\pure{0}\pure{e_0}] \\
&= \frac{1}{4}\tr\left[\mathbbm{1} + \frac{s_{i}}{|\vec{s}|} \sigma_i + \frac{t_i}{{|\vec{t}}|}\sigma_i + \frac{s_{i}t_j}{|\vec{s}||\vec{t}|}\sigma_i\sigma_j\right] \\
&= \frac{1}{4}\left[2 + 0 + 0 + 2\frac{\vec{s}\cdot\vec{t}}{|\vec{s}||\vec{t}|}\right] = \frac{1}{2}[1+ \cos\theta] ,
\end{split}\end{align}
where summations over double indeces are assumed and  $\theta$ is the angle between the two Bloch vectors of $\rho$ and $\eta^H$, see Fig.~\ref{fig:Blochpicture}b. By using orthonormality of both bases $\{\ket{l}\}_l$ and $\{\ket{e_k}\}_k$ it is shown that in the qubit case the matrices $M$  and $M^TM$ have the form 
\begin{align}
M = \begin{pmatrix}
\ephi & |\langle e_0|1\rangle|^2\\
|\langle e_1|0\rangle|^2 & |\langle e_1|1\rangle|^2
\end{pmatrix}
= \frac{1}{2}\begin{pmatrix}
1+\cos\theta & 1-\cos\theta \\
1-\cos\theta & 1+\cos\theta
\end{pmatrix}
\end{align}
and
\begin{align}
M^TM = \frac{1}{2}\begin{pmatrix}
1+\cos^2\theta & 1-\cos^2\theta \\
1-\cos^2\theta & 1+\cos^2\theta
\end{pmatrix}.
\end{align}
Computing the eigenvalues $\{\lambda_1, \lambda_2\}$ of $M^TM$ yields $\lambda_1=1$ and  $\lambda_2 = \cos^2\theta$. Therefore $\Delta A^H = 1-\lambda_2=1-\cos^2\theta = \sin^2\theta$. In total, this concludes the proof of the bound for the special case of a spin-1/2 system when projected in the energy eigenbasis $\{ \Pi^H_k \}_k = \{ \ket{e_0}, \ket{e_1} \}$, 
\begin{align}
\Delta S^H  (\rho) \ge \frac{1}{4} \, \left| \vec{s}_\rho \right|^2 \, \sin^2\theta.
\end{align}

To further illustrate the bound consider the special case when the initial state $\rho$ is pure and its eigenbasis mutually unbiased with respect to the energy eigenbasis, $\{ \ket{e_0}, \ket{e_1} \}$. In this case the final state after the projection, $\eta^H$, is maximally mixed and we find
\begin{align}
\Delta S^H (\rho) = S(\eta^H) - S(\rho) = \ln 2 - 0 \approx 0.69\,.
\end{align}
Here, the lower bound is equal to $\frac{1}{4} = 0.25$ because $\left| \vec{s}_\rho \right| = 1$ for a pure state $\rho$ and $\sin^2\theta = 1$ for mutually unbiased bases. Thus in this example the bound is not particularly tight.

\section{Single-shot analysis} \label{app:singleshot}

Instead of performing a thermodynamic process on an ensemble of $N$ identical and independent copies one can consider a single run of the process. Two major recent frameworks \cite{Aberg13, HO13} have been developed to describe the optimal work that can be drawn from a system in a single run. The proposal by {\AA}berg \cite{Aberg13}, involves changes of the Hamiltonian and identifies work with the deterministic energy change of the system when undergoing a unitary process. The proposal by Horodecki-Oppenheim \cite{HO13}, is formulated in terms of \textit{thermal operations} \cite{Brandao13a}, where work is associated with raising a two-level system, called the `work qubit', with energy gap $W$ deterministically from the ground to the excited state. 

However, when attempting to apply these two frameworks to find the single-shot work for the energy projections $\rho \to \eta^H$ captured by Eq.~(\ref{eq:Wmeasure}) one encounters an obstacle: both frameworks only apply to processes between initial and final states that are classical, i.e. states that are diagonal in the energy basis. {\AA}berg discusses coherences in a separate framework \cite{Aberg14}, which does however not cover single-shot work extraction and only focusses on average quantities, similar to those in other references \cite{SSP14, AG13}. Horodecki-Oppenheim suggest that quantum states with coherences with respect to the energy eigenbasis are first decohered before applying the single-shot protocol. As discussed, apart from decohering there are other thermodynamic projection processes that map the initial state with coherences, $\rho$, to the final state $\eta^H = \sum_k \Pi^H_k \, \rho \, \Pi^H_k$ without coherences, where $\Pi^H_k$ are the projectors on the energy eigenstates of the Hamiltonian, $H$. Eq.~(\ref{eq:Wmeasure}) shows that the  average work extracted in an optimal thermodynamic projection process is strictly positive while the decoherence process has zero work. Therefore one may expect a positive optimal work for projections also in the single-shot setting, with decohering a suboptimal choice, see Fig.~\ref{fig:singleshotmmt}.

%%%%%%%%%%%%%%%%%%
\begin{figure}[t]
%   \begin{center}
    \includegraphics[width=0.45\textwidth]{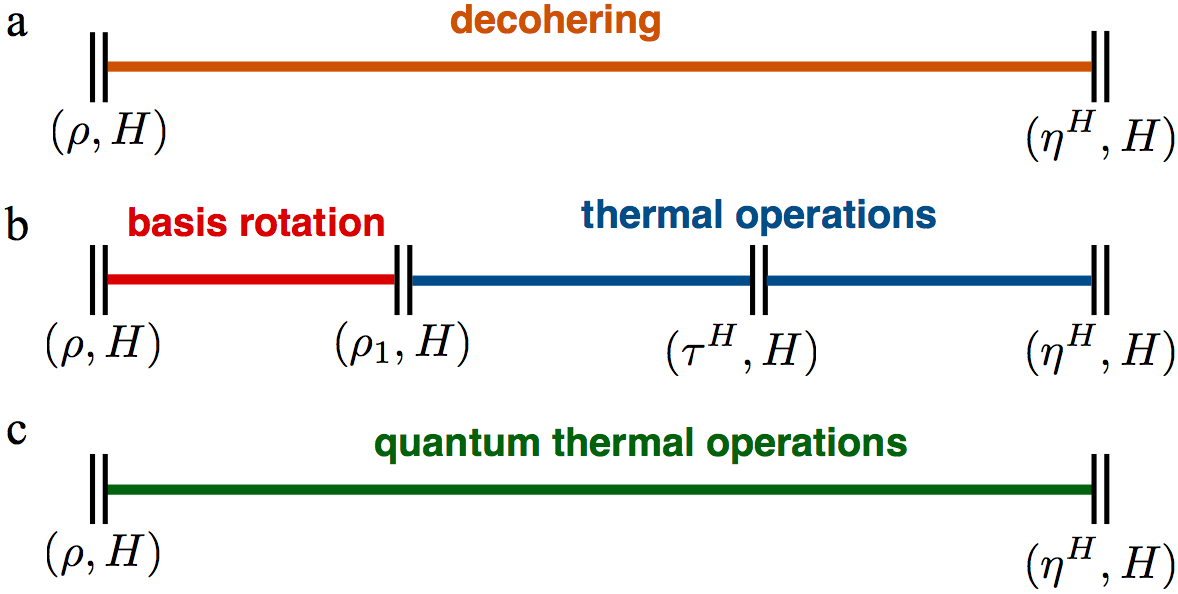}
        \caption{\label{fig:singleshotmmt}
        \footnotesize {\bfseries  Illustration of possible quantum thermodynamic processes that transform $(\rho, H)$ to $(\eta^H, H)$.} {\bfseries a,} Decohering the state in the energy basis extracts no work. {\bfseries b,} 
        To perform a consistency check between the average and single-shot results it is possible to split the process into a basis rotation to $(\rho_1, H)$ with unknown single-shot work, but known average work, and two thermal operations that pass through the thermal state $(\tau^H, H)$ and are treatable in the single-shot framework \cite{HO13}. 
        {\bfseries c,} General quantum thermodynamic processes could allow coherences and need not pass through intermediate fixed states. 
}
  % \end{center}
\end{figure}
%%%%%%%%%%%%%%%%%%

Since our focus here is the $N\rightarrow\infty$ limit we will not aim to construct the single-shot case. Instead, to establish a notion of consistency between the average analysis and previous single-shot work results we consider the sequence $(\rho,H) \stackrel{a}{\longrightarrow} (\rho_1,H) \stackrel{b}{\longrightarrow}  (\tau^H,H) \stackrel{c}{\longrightarrow}  (\eta^H,H)$ in which the Hamiltonian before and after each step are the same and $\rho_1$ is the rotated state defined above, Eq.~(\ref{eq:rho1}).
Here  $\tau^H = \frac{e^{-\beta H}}{Z}$ is the thermal state for Hamiltonian $H$ at inverse temperature $\beta = \frac{1}{k_BT}$ and  $Z=\tr[e^{-\beta H}]$ is the partition function. Step $a$ of this sequence rotates the initial non-diagonal state $\rho$ to the diagonal state $\rho_1$. As discussed, it cannot be treated with the single-shot framework \cite{Aberg13, HO13}  but it is possible to associate an average extracted work with this unitary process, $\avg{W\upix a} = \tr[(\rho-\rho_1)\,H]$. A single-shot analysis according to Horodecki-Oppenheim \cite{HO13} can then be performed for the diagonal steps $b$ and $c$. This is possible because the steps go via the thermal state $\tau^H$. Step $b$ brings $\rho_1$ to $\tau^H$ and allows the extraction of the single-shot work \cite{HO13}
\begin{align}
	W\upix b = k_B\, T\,\ln2\,\Dmineps(\rho_1||\tau^H),
\end{align}
where $\Dmineps$ is the smooth min-relative entropy \cite{Datta09} and $\eps \geq 0$ is the allowed failure probability of the process. Similarly, in Step $c$ the final state $\eta^H$ is formed from the thermal state by applying a protocol that costs work. This work is \cite{HO13}
\begin{align}
	W\upix c = -k_BT\,\ln2\,\Dmaxeps(\eta^H||\tau^H),
\end{align}
where $\Dmaxeps$ is the smooth max-relative entropy \cite{Datta09}. In total, the single-shot work associated to Steps $b$ and $c$ of the process is $W\upix b + W\upix c = k_BT\,\ln2\,(\Dmineps(\rho_1||\tau^H) - \Dmaxeps(\eta^H||\tau^H))$ with failure probability at most $2\eps-\eps^2 \approx 2\eps$, when $\eps$ is small. 

To show consistency we now consider the average expected work extracted per copy if the single-shot protocol is carried out on $N\rightarrow\infty$ i.i.d. copies of the system. In such a calculation the work computed is an average value which is why $\avg{W^{(a)}}$, the average work contribution of the basis rotation in Step $a$, can be taken into account too. One obtains a total average work per copy of
\begin{align}\begin{split} \label{eq:avgopt}
	\avg{W} &= \avg{W^{(a)}} + k_BT\,\ln2\, \lim_{\eps\rightarrow0}\lim_{N\rightarrow\infty} \frac{1}{N} \\
	& \quad \left[\, \Dmineps\big(\rho_1^{\otimes N}||(\tau^H)^{\otimes N}\big) - \Dmaxeps\big((\eta^H)^{\otimes N}||(\tau^H)^{\otimes N}\big) \,\right] \\
	&= \avg{W^{(a)}} + k_BT\,\ln2\,\left[\, D(\rho_1||\tau^H) - D(\eta^H||\tau^H)\,\right] \\
	&= \avg{W^{(a)}} + \tr[\rho_1\, H] +\ln Z -k_BT\,S(\rho_1) \\	
	&\quad - \tr[\eta^H\, H] - \ln Z + k_BT\, S(\eta^H)  \\
		&= k_BT\, (S(\eta^H) - S(\rho))  \\
		&\equiv \avg{W_\opt},
\end{split}\end{align}
where we have used the quantum asymptotic equipartition theorem for relative entropies \cite{Tomamichel12, GenEntropies} in the second line. $D(\cdot||\cdot)$ is the standard quantum relative entropy defined by $D(\eta^H||\tau^H) = \tr[\,\eta^H(\log\eta^H - \log \tau^H)\,]$ and likewise for $\rho_1$, where $\log$ is the logarithm to base 2. The quantities $\Dmineps$ and $\Dmaxeps$ as well as their regularized version, the standard quantum relative entropy $D$, can be seen as different measures characterizing the distance between two states. When applied here, they measure the `distance' between the thermal state $\tau^H$ and another diagonal state in such a way that the operational meaning of this distance is given by the work one has to invest or is able to extract when transforming one into the other.

The derivation shows that in the asymptotic limit the optimal average work is recovered from the single-shot components. But it is important to realise that from Eq.~(\ref{eq:avgopt}) one cannot conclude that the above single-shot process forming $\eta^H$ from $\rho_1$ is optimal. Going via the thermal state is just one option which is particularly convenient in this case as the processes of maximal work extraction and work of formation from the thermal state have been treated in the single-shot scenario \cite{HO13}. It is an open question whether there are better single-shot protocols for general thermodynamic transformations, see Fig.~\ref{fig:singleshotmmt}b\ \&\ \ref{fig:singleshotmmt}c. {The introduction of ``catalysts'' in single-shot thermodynamics \cite{Brandao13b} provides a promising avenue to establish bounds on the work that can be drawn from a state with coherences during a projection in the single-shot setting. }

After making public our results on average work associated with removing coherences in thermodynamic projection processes very recently a paper appeared \cite{David15} that derives the work that can be extracted when removing coherences in a single-shot setting. In this paper the previously mentioned framework describing the catalytic role of coherence in thermodynamics by {\AA}berg \cite{Aberg14} is used together with insights from reference frames in quantum information theory. These results are in agreement with our findings and strengthen our conclusion that coherences are a fundamental feature distinguishing quantum from classical thermodynamics.

\section{Quantum work fluctuation relation} \label{app:jarzynski}

A common route of deriving the quantum Jarzynski equation is as follows \cite{Mukamel03,TLH07,MT11}. A quantum system is initialised in a thermal state $\rho_0 = \sum_n \, e^{-\beta (E\upix 0_n - F\upix 0_T)}  \, \Pi\upix 0_n$ for a given Hamiltonian $H\upix 0 = \sum_{n} E\upix 0_n \, \Pi\upix 0_n$, with energy eigenvector projectors $\Pi\upix 0_n =  \ket{e\upix 0_n} \bra{e\upix 0_n}$, at given inverse temperature $\beta =1/(k_B \, T)$. Here $F\upix 0_T= -k_BT\,\ln \left(\sum_n \, e^{-\beta E\upix 0_n}\right)$ is the initial free energy associated with the initial Hamiltonian $H\upix 0$. The aim is to calculate the average exponentiated work, $\avg{e^{\beta W} }$, that the quantum system will exchange when undergoing a unitary process $V$ that is generated by varying the Hamiltonian in time, i.e. $H\upix t$, from $ H\upix 0$ to a final $H\upix \tau = \sum_{m} E\upix \tau_m \, \Pi\upix \tau_m $. The final state after the unitary is the non-equilibrium state $\rho_{\tau} := V \, \rho_0 \, V^{\dag}$, see Fig.~\ref{fig:QuJarzynski}a. 

To identify the work for an individual run of the experiment the energy of the system is measured at the beginning, by projecting into $\Pi\upix 0_n$, and at the end, by projecting in the final energy basis $\Pi\upix \tau_m = \ket{e\upix \tau_m} \bra{e\upix \tau_m}$. The (extracted) fluctuating work $W$ identified with each transition $\ket{e\upix 0_n} \to \ket{e\upix \tau_m}$ is the (negative) observed \emph{fluctuating energy difference} of the system
\begin{align} \label{eq:single-work}
 	\Delta E =  E\upix \tau_m - E\upix 0_n = - W.
\end{align}
The average exponentiated work then becomes
\begin{align}
    \avg{e^{\beta W} } = \sum_{m,n} e^{- \beta (E\upix \tau_m - E\upix 0_n)}  \, p^{\tau}_{m,n},
\end{align}
where $p^{\tau}_{m,n}$ are the transition probabilities for energy jumps starting in $\ket{e\upix 0_n}$ and ending in $\ket{e\upix \tau_m}$ at time $\tau$. These probabilities are given by
\begin{align}\begin{split}
    p^{\tau}_{m,n} &= \tr[\,\Pi\upix \tau_m \, V \, \Pi\upix 0_n \, \rho_0 \, \Pi\upix 0_n \, V^{\dag} \, \Pi\upix \tau_m\,]\\
    &= e^{-\beta (E\upix 0_n - F\upix 0_T)} \, \tr[\,\Pi\upix \tau_m \, V \, \Pi\upix 0_n \, V^{\dag}\,],
\end{split}\end{align}
simplifying the exponentiated average work to 
\begin{align}
    \avg{e^{\beta W} } = e^{\beta F\upix 0_T} \, \sum_{m} e^{-\beta E\upix \tau_m}  \, \sum_n \, \tr[\,\Pi\upix \tau_m \, V \, \Pi\upix 0_n \, V^{\dag}\,].
\end{align}
The completeness of the projectors, $\sum_n \Pi\upix 0_n = \mathbbm{1}$, now finally results in the well-known quantum Jarzynski work relation
\begin{align}
    \avg{e^{\beta W} } = e^{\beta F\upix 0_T} \, \sum_{m} e^{-\beta E\upix \tau_m} = e^{- \beta \Delta F},
\end{align}
where $\Delta F = F\upix \tau_T - F\upix 0_T$ is the difference of the equilibrium free energies corresponding to the final and initial Hamiltonians, i.e. $F\upix \tau_T = -k_BT\,\ln \left(\sum_m \, e^{-\beta E\upix \tau_m}\right)$. 
Similarly, the average work extracted from the system is the average energy difference between $\rho_0$ and $\rho_{\tau}$
\begin{align}\begin{split}
    \avg{W^{\rm unitary}} &= - \sum_{m,n} {(E\upix \tau_m - E\upix 0_n)}  \, p^{\tau}_{m,n} \\
    &= - \sum_{m} E\upix \tau_m \, p^{\tau}_{m}  + \sum_{n} E\upix 0_n \, p^{0}_{n} \\
	&= - \tr[H\upix \tau \, \rho_\tau] + \tr[H\upix 0_n \, \rho_0] \\
	&= - U(\rho_\tau) + U(\rho_0)
\end{split}\end{align}
where $p^{\tau}_{m} := \sum_n p^{\tau}_{m,n} =  \tr[ \rho_\tau \, \Pi\upix \tau_m\,]$ are the probabilities to find energies $E\upix \tau_m$ when a measurement of $H\upix \tau$ is performed on $\rho_\tau$. By construction, the initial probabilities to find energies $E\upix 0_n$ are just the thermal probabilities, $p^{0}_{n} := \sum_m p^{\tau}_{m,n} = \tr[ e^{-\beta (E\upix 0_n - F\upix 0_T)}  \, \Pi\upix 0_n] = e^{-\beta (E\upix 0_n - F\upix 0_T)}$. 

The derivations of the average work, $\avg{W^{\rm unitary}} = U(\rho_0) - U(\rho_\tau)$, as well as the average exponentiated work, i.e. the quantum Jarzynski equality, $\avg{e^{\beta W} } = e^{- \beta \Delta F}$, are based on Eq.~(\ref{eq:single-work}) which was made assuming that the initial and final state of the process that is being characterised are $\rho_0$ and $\rho_{\tau}$.
There is no question that the mathematical details of the derivations of the above relation are sound. Experimentally, there is however a need to acquire knowledge of the fluctuating energy to quantify the work and this requires the implementation of the second measurement, see Fig.~\ref{fig:QuJarzynski}b. Only after the measurement has been made can theoretical predictions be tested. The measurement is  an unavoidable non-unitary component of the overall experimental process. Specifically, the ensemble state after the unitary, $\rho_{\tau}$, is further altered by the measurement to result in the final state $\eta_{\tau}$, i.e. it is the state $\rho_{\tau}$ with any coherences in the energy basis of $H\upix {\tau}$ removed.  

While the experimentally observed average energy difference is not affected by the measurement step, i.e. $U(\eta_{\tau}) - U(\rho_0) = U(\rho_{\tau}) - U(\rho_0)$, the entropy difference does change, i.e. $S(\eta_{\tau}) - S(\rho_0) \not = S(\rho_{\tau}) - S(\rho_0) =0$. This means that the system may absorb heat, $\avg{Q^{\rm abs}}$,  during the measurement step, indicated in Fig.~\ref{fig:QuJarzynski}b. Its actual value depends on \emph{how} the measurement is conducted with the optimal heat positive, $\avg{Q^{\rm abs}_{\rm opt}} = k_B\,T (S(\eta_{\tau}) - S(\rho_\tau)) \geq 0$. Since $\Delta U = \avg{Q^{\rm abs}} - \avg{W}$ (T1) this implies that in an experimental implementation of the Jarzynski relation the work done by the system on average can be more than previously thought, with the optimal value being $\avg{W_{\rm opt}} = \avg{W^{\rm unitary}} + k_B \, T \, (S(\eta_{\tau}) - S(\rho_{\tau}))$. In the special case that the average heat $\avg{Q^{\rm abs}}$ is zero it is possible (although not necessary) that Eq.~(\ref{eq:single-work}), and thus the standard Jarzynski expression $\avg{e^{\beta W} } = e^{- \beta \Delta F}$, are correct. In particular this applies to classical measurements. We conclude that the suitability of identifying $W = - \Delta E$, and hence the validity of the quantum Jarzynski work relation depends on the details of the physical process that implements the measurement. 

Quantum work fluctuation relations  that have only one measurement \cite{Mazzola13, Roncaglia14}, instead of the two discussed above, offer a feasible route of measuring work fluctuations experimentally. Instead of measuring separately the initial and final fluctuating energies, $E\upix 0_n$ and $E\upix \tau_m$, to establish their joint probabilities, this method acquires \emph{only} knowledge of the joint probabilities by measuring energy differences $\Delta E$ directly. But also here is one final measurement, in general on a non-diagonal state, needed.

\section{Access to correlated auxiliary systems} \label{app:correlations}

Similarly to erasure with a correlated memory \cite{delRio} one can consider projections on a system $S$ that is correlated with an ancilla $A$ the experimenter has access to. Assuming a total Hamiltonian $H^{SA} = H^S\otimes\mathbbm{1}^A + \mathbbm{1}^S\otimes H^A$, we denote the global initial state by $\rho^{SA}$ and its marginals on $S$ and $A$ by $\rho^S = \tr_A[\rho^{SA}]$ and $\rho^A = \tr_S[\rho^{SA}]$, respectively. \\

\textbf{A note on notation.} For clarity we employ a slightly different notation here. The roles of initial state $\rho$ and final state $\eta$ are the same as in the main text and the previous sections of the Appendix. However, now the superscripts of the final state $\eta$ no longer denote the projection basis but the system for which $\eta$ describes the state. For instance, $\eta^{S}$ denotes the reduced state after the projection on system $S$ alone. The same holds for the superscript of the initial state, $\rho^{SA}, \rho^S$ and $\rho^A$, and the Hamiltonians $H^{SA}, H^S$ and $H^A$. Only the superscript ${\cal P}$ of the mutually orthogonal rank-1 projectors $\{ \Pi^{\cal P}_k\}_k$ acting on system $S$ is kept to indicate which basis is being projected in. \\

For an initial global state $\rho^{SA}$ of system and ancilla a local projection map on $S$ results in a new global state
\begin{align}
\eta^{SA} = \sum_k \left(\Pi^{\cal P}_k\otimes\mathbbm{1}^A \right)\, \rho^{SA} \, \left(\Pi^{\cal P}_k\otimes\mathbbm{1}^A\right).
\end{align}
Due to the properties of the projectors the marginal state on $A$ is unchanged,
\begin{align}\begin{split}
\eta^A &= \tr_S[ \eta^{SA}] = \sum_k \tr_S[(\Pi^{\cal P}_k\otimes\mathbbm{1}^A)\, \rho^{SA} \, (\Pi^{\cal P}_k\otimes\mathbbm{1}^A)] \\
&= \sum_k \tr_S[\rho^{SA} \, (\Pi^{\cal P}_k\otimes\mathbbm{1}^A)] = \tr_S[ \rho^{SA}] = \rho^A.
\end{split}\end{align}
The reduced state of the system becomes $\eta^{S} \equiv \tr_A[\eta^{SA}] = \sum_k \Pi^{\cal P}_k \, \rho^{S} \, \Pi^{\cal P}_k = \sum_k p_k \, \Pi^{\cal P}_k$ where $p_k = \tr[\rho^{S} \Pi^{\cal P}_k]$, and the conditional states on $A$ after the process are denoted $\eta^A_k = p_k^{-1}\, \tr_S[\,(\Pi^{\cal P}_k\otimes\mathbbm{1}^A) \,  \rho^{SA} \, (\Pi^{\cal P}_k\otimes \mathbbm{1}^A)\, ]$ for all $k$. The global entropy change associated with the local projection is
\begin{align}\begin{split}
\label{eq:deltaSSA}
	\Delta {\cal S}^{\cal P} &= S(\eta^{SA}) - S(\rho^{SA}) \\
	&= S(\{p_k\}) + \sum_k p_k S(\eta^A_k) - S(\rho^{SA}) \\
	&= S(\eta^S) - S(\rho^S) + S(\rho^{S}) + \sum_k p_k S(\eta^A_k) - S(\rho^{SA}) \\
	&= \Delta S^{\cal P} + \delta^{\cal P}(A:S).
\end{split}\end{align}
In the second equality it was used that $\eta^{SA}$ is a classical-quantum-state and $S(\{p_k\}) = -\sum_k p_k\ln p_k$ stands for the classical Shannon entropy \cite{Shannon48} which is equal to the von Neumann entropy of $\eta^S$ because the final state on $S$ is a classical mixture of states from the projective basis. Here we defined a measure of correlations between the ancilla and the system, $\delta^{\cal P}(A:S) = S(\rho^{S}) - S(\rho^{SA}) + \sum_k p_k S(\eta^A_k)$, related to the quantum discord. It depends on the projectors $\{ \Pi^{\cal P}_k \}_k$ and is always positive \cite{discord, OllivierZurek02}. Thus the entropy change of $SA$ can be bigger than the local entropy change, $\Delta S^{\cal P} =S(\eta^S) - S(\rho^S)$, on the system alone.

As shown before, Eq.~(\ref{eq:Wmeasuregeneral}), the optimal extractable work in a thermodynamic projection process on system $S$ alone is $\avg{W_\opt} = k_B\, T\, \Delta S^{\cal P} - \Delta U^{\cal P}$, where $\Delta S^{\cal P}$ is the entropy change of the system and $\Delta U^{\cal P}$ its change in internal energy. This result stays intact when generalizing to projections in the presence of ancillary systems if one takes the total changes of these quantities on $SA$ instead of the change on $S$ only. In the global process the total internal energy change is equal to the energy change of the system only as the local state of the ancilla is unchanged and the total Hamiltonian is the sum of local Hamiltonians. Thus using side information the overall optimal extractable work amounts to
\begin{align}\begin{split}
\avg{{\cal W}_\opt}  &= k_B T\, \Delta {\cal S}^{\cal P} - \Delta U^{\cal P} \\
&= k_B T\, \Delta S^{\cal P}  + k_BT \, \delta^{\cal P} (A:S) - \Delta U^{\cal P}\\
&= \avg{W_\opt} + k_BT \, \delta^{\cal P} (A:S),
\end{split}\end{align}
where $\avg{W_\opt}$ is the work of an optimal thermodynamic projection process without access to correlated systems, Eq.~(\ref{eq:Wmeasuregeneral}).
 
Discord was first discussed in a thermodynamic context by Zurek \cite{Zurek03}, where he related it to the advantage a quantum Maxwell demon could have over a classical one. In general the quantum discord, $\delta(A:S)$, is defined as the minimum of $\delta^{\cal P}(A:S)$ over all sets of projectors $\{ \Pi^{\cal P}_k \}_k$ whereas in our case this set is fixed (see e.g. Modi \textit{et al.} \cite{ModiReview} for a review). Therefore it is found that even for states with no quantum discord, usually referred to as classically correlated states, a difference in work associated with thermodynamic projection processes can be observed. This contrasts with the erasure process \cite{delRio} where an advantage could only be gained for highly entangled states.

One may ask what global states on $SA$ maximize $\avg{{\cal W}_\opt}$ for a given state $\rho^S$ on $S$. Expectedly, it can be shown that purifications of $\rho^S$ yield the best improvement in terms of extracted work. Given $\rho^S = \sum_l a_l \pure{l}$ any purification is, up to isometries on the purifying system \cite{NielsenChuang}, equivalent to $\ket{\Psi}=\sum_l \sqrt{a_l} \ket{l}^S\ket{l}^A$ for some orthonormal basis $\{\ket{l}^A\}_l$ of $A$. For such a state the conditional states on $A$ after the projection, $\rho_k^A$, are pure for all $k$ which implies that they have zero entropy. This implies
\begin{align}\begin{split}
\delta^{\cal P}(A:S) &= S(\rho^S) - S(\pure{\Psi}) +\sum_k p_k S(\eta_k^A) \\
&= S(\rho^S) - 0 + \sum_k p_k\cdot0 = S(\rho^S). 
\end{split}\end{align}
The optimal total extracted work from a purified state on $SA$ in a thermodynamic projection  process is therefore $\avg{{\cal W}_\opt} = k_BT\, S(\eta^S) - \Delta U^{\cal P}$ which can be shown to be the maximum for fixed $\rho^S$ and projectors $\{ \Pi_k^{\cal P} \}_k$. One way to see this is the following Lemma \cite{Renato}. 

\begin{lemma}
Let $\rho^{SA}$ be an arbitrary state on a bipartite system $SA$, and let $\{\Pi^{\cal P}_k\}_k$ be a complete set of rank-1 orthogonal projectors on $S$. Consider the global state after the projection, $\eta^{SA} = \sum_k (\Pi^{\cal P}_k\otimes \mathbbm{1}^A) \,  \rho^{SA} \, (\Pi^{\cal P}_k\otimes \mathbbm{1}^A) = \sum_k p_k\, \Pi^{\cal P}_k\otimes\eta^A_k$, where $p_k = \tr[(\Pi^{\cal P}_k\otimes \mathbbm{1}^A) \, \rho^{SA}]$ are the probabilities to measure $k$ on $S$ and $\eta^A_k = p_k^{-1}\, \tr_S[ (\Pi^{\cal P}_k\otimes \mathbbm{1}^A)\,  \rho^{SA} \, (\Pi^{\cal P}_k\otimes \mathbbm{1}^A) ]$ are the conditional states on $A$. Then
\begin{align}
S(\rho^{SA}) \, \geq \, \sum_k p_k S(\eta_k^A).
\end{align}
\end{lemma}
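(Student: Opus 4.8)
The plan is to reduce the inequality to concavity of the von Neumann entropy by passing to a purification of $\rho^{SA}$. First I would introduce an auxiliary system $E$ and a pure state $\ket{\psi}^{SAE}$ with $\rho^{SA} = \tr_E \pure{\psi}$, so that $S(\rho^{SA}) = S(\rho^E)$ for $\rho^E := \tr_{SA}\pure{\psi}$. Writing the rank-1 projectors as $\Pi^{\cal P}_k = \pure{\phi_k}$, the first step is to note that acting with $\Pi^{\cal P}_k \otimes \mathbbm{1}^{AE}$ on $\ket{\psi}$ produces a state that factorises across $S$ versus $AE$,
\begin{align}
	(\Pi^{\cal P}_k \otimes \mathbbm{1}^{AE})\, \ket{\psi}^{SAE} = \sqrt{p_k}\; \ket{\phi_k}^S \otimes \ket{\chi_k}^{AE},
\end{align}
with $\ket{\chi_k}^{AE}$ normalised and $p_k = \bra{\psi}(\Pi^{\cal P}_k\otimes\mathbbm{1})\ket{\psi} = \tr[(\Pi^{\cal P}_k \otimes \mathbbm{1}^A)\rho^{SA}]$, matching the probabilities in the statement. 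Tracing out $E$ on both sides and using $(\Pi^{\cal P}_k)^2 = \Pi^{\cal P}_k$ together with cyclicity of $\tr_S$, one checks that $\eta^A_k = \tr_E \pure{\chi_k}$ is exactly the conditional state of the lemma; and since $\ket{\chi_k}^{AE}$ is pure, $S(\eta^A_k) = S(\eta^E_k)$ with $\eta^E_k := \tr_A \pure{\chi_k}$.

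The second step is to evaluate the marginal $\rho^E$. Inserting the resolution of the identity $\sum_k \Pi^{\cal P}_k = \mathbbm{1}^S$ on the left and on the right of $\pure{\psi}$ and tracing out $S$, every cross term with $k \neq k'$ drops because $\tr_S[\Pi^{\cal P}_k (\cdot) \Pi^{\cal P}_{k'}] = \delta_{k k'} \tr_S[\Pi^{\cal P}_k (\cdot)]$; tracing out $A$ as well then leaves $\rho^E = \sum_k p_k\, \eta^E_k$. Concavity of the von Neumann entropy gives $S(\rho^E) \ge \sum_k p_k S(\eta^E_k)$, and chaining this with $S(\rho^{SA}) = S(\rho^E)$ and $S(\eta^E_k) = S(\eta^A_k)$ yields $S(\rho^{SA}) \ge \sum_k p_k S(\eta^A_k)$, as claimed.

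The argument invokes only standard facts — equality of the entropies of complementary marginals of a pure state, and concavity of $S$ — so I expect the one delicate point to be the bookkeeping in the first step: confirming that the vectors $\ket{\chi_k}^{AE}$ extracted from the purification really reproduce the $\eta^A_k$ defined in the statement, and that the off-diagonal terms genuinely vanish when forming $\rho^E$. As a consistency check, the equality case is transparent in this picture: if $\rho^{SA}$ is pure one may take $E$ trivial, each $\ket{\chi_k}$ is then a vector on $A$ alone, $\eta^A_k$ is pure, and both sides vanish — recovering the observation used for purified inputs above. One could instead phrase the proof through the classical–quantum state $\eta^{SA} = \sum_k p_k\, \Pi^{\cal P}_k \otimes \eta^A_k$, for which $\sum_k p_k S(\eta^A_k) = S(\eta^{SA}) - S(\eta^S)$, but the purification route is the most economical.
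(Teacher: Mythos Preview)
Your proof is correct. The factorisation in the first step is exactly what rank-1 projectors buy you: $(\pure{\phi_k}\otimes\mathbbm{1}^{AE})\ket{\psi} = \ket{\phi_k}\otimes(\bra{\phi_k}\otimes\mathbbm{1}^{AE})\ket{\psi}$, and the rest of the bookkeeping (identifying $\eta^A_k$, killing the cross terms when forming $\rho^E$) goes through as you describe.

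The paper takes a genuinely different route. Instead of purifying $\rho^{SA}$ with an environment $E$, it introduces a copy $\tilde S$ of $S$ and models the projection as an isometry $\Phi^{S\to S\tilde S} = \sum_k \ket{\psi_k}^{\tilde S}\otimes\Pi_k^S$, so that $\eta^{S\tilde S A} = \Phi\,\rho^{SA}\Phi^\dagger$ has the same entropy as $\rho^{SA}$ and satisfies $\tr_{\tilde S}\eta^{S\tilde S A}=\eta^{SA}$. The inequality then follows from the Araki--Lieb triangle inequality $S(\eta^{S\tilde S A})\ge S(\eta^{SA})-S(\eta^{\tilde S})$ together with $S(\eta^{\tilde S})=S(\eta^S)$. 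In other words, the paper works on the \emph{output} side (pointer register plus subadditivity/triangle inequality), whereas you work on the \emph{input} side (purifying environment plus concavity). Your argument is arguably the more elementary of the two, since concavity is a more primitive fact than Araki--Lieb; the paper's version, on the other hand, stays closer to the classical--quantum structure of $\eta^{SA}$ and connects directly to the identity $\sum_k p_k S(\eta^A_k)=S(\eta^{SA})-S(\eta^S)$ that you mention at the end as an alternative.
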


\begin{proof}
We model the process on $S$ as an isometry $\Phi^{S\rightarrow S\tilde{S}} = \sum_k \ket{\psi_k}^{\tilde{S}}\otimes \Pi_k^S$, where $\tilde{S}$ is a copy of $S$ and $\{\ket{\psi_k}^{\tilde{S}}\}_k$ is an orthonormal basis of $\tilde{S}$. The state after applying the isometry is denoted $\eta^{S\tilde{S}A} = \Phi\rho^{SA}\Phi^\dagger$ and we note that $\tr_{\tilde{S}}[\eta^{S\tilde{S}A}] = \eta^{SA}$. Furthermore, isometries do not change (von Neumann) entropy and thus, $S(\eta^{S\tilde{S}A}) = S(\rho^{SA})$. In addition, by construction of $\tilde{S}$ the marginals of the final state on $S$ and $\tilde{S}$ have the same entropy: $S(\eta^{\tilde{S}}) = S(\eta^S)$. Since $S(\eta^{S\tilde{S}A}) \geq |S(\eta^{SA})- S(\eta^{\tilde{S}})| \geq S(\eta^{SA})- S(\eta^{\tilde{S}})$ (see e.g. Nielsen \& Chuang \cite{NielsenChuang}). Thus
\begin{align}\begin{split}
S(\rho^{SA}) &= S(\eta^{S\tilde{S}A}) \geq S(\eta^{SA}) - S(\eta^{\tilde{S}}) \\
&=S(\eta^{SA}) - S(\eta^S) = \sum_k p_k S(\eta_k^A),
\end{split}\end{align}
where in the last equality we made use of the fact that $\eta^{SA}$ is a classical-quantum state.
\end{proof}

Going back to Eq.~(\ref{eq:deltaSSA}) and applying the the above Lemma we see that in general $\avg{{\cal W}_\opt} = k_B T\, \Delta {\cal S}^{\cal P} - \Delta U^{\cal P} = k_BT\, \left[\,S(\eta^S) +\sum_k p_k S(\eta_k^A) - S(\rho^{SA})\,\right] - \Delta U^{\cal P} \leq k_BT \, S(\eta^S)- \Delta U^{\cal P}$, which proves that purifications on $SA$ yield the maximally possible extracted work.

%\section*{References}

\end{document}